\newcolumntype{b}{>{\centering\arraybackslash}X}
\newcolumntype{s}{>{\hsize=.5\hsize}b}
\newcommand{\MCBI}{\textsc{MCBI}\xspace}
\newcommand{\MMCBI}{\textsc{max-MCBI}\xspace}
\newcommand{\MI}{MI\xspace}
\newcommand{\mcb}[1]{\mathcal{MCB}(#1)}
\title{Maximizing Minimum Cycle Bases Intersection}
\author{Ylène Aboulfath \inst{1} \and Dimitri Watel \inst{2,3} \and Marc-Antoine Weisser \inst{4} \and Thierry Mautor \inst{1} \and Dominique Barth \inst{1} }
\authorrunning{Aboulfath Y. et al.} 
\institute{DAVID, Université Versailles Saint-Quentin-En-Yvelines, France \email{ylene.aboulfath@uvsq.fr}, \email{dominique.barth@uvsq.fr}, \email{thierry.mautor@uvsq.fr} \and SAMOVAR, Evry, France \and ENSIIE, Evry, France \email{dimitri.watel@ensiie.fr} \and  LISN, CentraleSupélec, France \email{marc-antoine.weisser@centralesupelec.fr}}
\begin{document}
	\maketitle
	
	\begin{abstract}
		We address a specific case of the matroid intersection problem: given a set of graphs sharing the same set of vertices, select a minimum cycle basis for each graph to maximize the size of their intersection. We provide a comprehensive complexity analysis of this problem, which finds applications in chemoinformatics. We establish a complete partition of subcases based on intrinsic parameters: the number of graphs, the maximum degree of the graphs, and the size of the longest cycle in the minimum cycle bases. Additionally, we present results concerning the approximability and parameterized complexity of the problem.
	\end{abstract}
	\keywords{Minimum cycle basis, Matroids intersection problem, Complexity}
	
	\section{Introduction}
	\label{sec:intro}
	
	In chemoinformatics and bioinformatics, a molecular dynamics trajectory represents evolution of the 3D positions of atoms constituting a molecule, thus forming a sequence of molecular graphs derived from these positions at discrete time intervals. These graphs share the same vertex set (atoms) but vary in their edges, particularly those representing hydrogen bonds, which may appear or disappear over time, unlike covalent bonds, which are persistent. A research objective is to characterize the evolution of molecular structure during the trajectory~\cite{Bougueroua18,LG16}.
	
	In many studies, the structure of a molecule is intricately linked to the interactions among elementary cycles within its associated graph. Usually, only short cycles intervenne in the characterization of the molecule, therefore, the structure is commonly represented by a minimum cycle basis of the graph \cite{BV13,GS01,Ilemo2019,KZ09,VP97}. Cycle bases are a concise representation of cycles within an undirected graph, and finding a minimum cycle basis ({\em i.e.} minimizing the total weight of cycles in the base) can be done in polynomial time \cite{AR09,HD87}. Thus, given a sequence of molecular graphs modeling the trajectory, to evaluate the conservation of the molecular structure during the trajectory \cite{Aboulfath2024}, we seek to obtain a minimum cycle basis for each graph such that they have the most cycles in common. 
	
	In this context, we define the problem, referred to as \MMCBI, as follows: given a set of $k$ graphs $\{G_1, G_2, \dots, G_k\}$ with the same vertex set, find for each graph a minimum cycle basis such that the size of their overall intersection is maximum. Note that \MMCBI is a special case of the matroid intersection problem (\MI) wherein, given $k$ matroids with the same ground set $C$, we search for one independent set in each matroid such that the size of their intersection is maximum. This is primarily because the set of cycles in an undirected graph forms a vector space. \MI is NP-Complete when $k = 3$ \cite{welsh2010matroid} but polynomial in $|C|$ when $k = 2$ \cite{Edmonds2003} and $\frac{1}{k}$-approximable \cite{korte1978analysis}. Transferring this positive results to \MMCBI can be achieved once we address the challenge posed by the potentially exponential number of cycles in a graph.
	
	\textbf{Our contributions.}
	In this paper, we exploit the distinctive features of our specific instance of \MI to establish a NP-Hard/Polynomial partition of subcases based on intrinsic parameters. Additionally, we investigate the parameterized complexity and approximability of the problem \MMCBI and its decision version \MCBI. In the decision version, given $k$ graphs and a non-negative integer $K$, the objective is to determine whether there exists a minimum cycle basis for each graph such that the size of their intersection is greater than $K$ or not.
	
	We distinguish four intrinsic parameters: the number of graphs $k$; the maximum size $\gamma$ of the cycles in a minimum cycle basis of any graph $G_i$; the maximum degree $\Delta$ in any graph $G_i$; and the decision integer $K$. The first parameter $k$ directly arises from the complexity of \MI which is polynomial for $k=2$ and $NP-complete$ otherwise. Parameters $\gamma$ and $\Delta$ arise from our application. Those are classical parameters studied in molecular contexts. Finally, $K$ is a classical parameter in parameterized complexity.
	
	The results are summarized in Table~\ref{tab:results}. Note that the case $\Delta = 2$ is trivially polynomial as each graph is a set of disjoint cycles. Considering the parameterized complexity and the approximability, few questions remain open.
	
	\begin{table}[ht!]
		\centering
		\caption{Summary of the results of this paper. The hardness results hold true only when the parameters in the \emph{blank cells} are not fixed and remain valid even when the parameters in the \emph{non-blank cells} are fixed to the specified values (or higher). The polynomial results hold true only when the parameters in the \emph{non-blank cells} are fixed to the specified value (or lower) and remain valid regardless of the values of the parameters in the \emph{blank cells}. Additionally, parameters in the parameterized results are indicated with a cross.
		}
		\label{tab:results}
		\setlength{\tabcolsep}{10pt}
		\renewcommand{\arraystretch}{0.8}
		\begin{tabular}{cccc|cc|c}
			\toprule
			$k$ & $\Delta$ & $\gamma$ & $K$ & \MCBI & \MMCBI & \\
			\midrule
			
			\multirow{2}{*}{3} & 4 & 4 & & \multirow{2}{*}{NP-Complete} & \multirow{2}{*}{NP-Hard} & \multirow{7}{*}{Theorem~\ref{theo:hardness}}\\
			& 3 & 5 & & & & \\
			\cmidrule(lr){1-6}
			
			\multirow{2}{*}{} & 4 & 4 & \multirow{2}{*}{-} & \multirow{2}{*}{-} & \multirow{2}{*}{$\frac{1}{k}$ Inapprox.} & \\
			& 3 & 5 & & & & \\
			\cmidrule(lr){1-6}
			
			\multirow{2}{*}{} & 4 & 4 & \multirow{2}{*}{$\times$} & \multirow{2}{*}{W[1]-Hard} & \multirow{2}{*}{-} &\\
			& 3 & 5 & & & & \\
			\midrule
			
			& 2&   & & P & P & Section~\ref{sec:intro}\\
			&  & 3 & & P & P & Theorem~\ref{theo:gamma:3:poly}\\
			& 3& 4 & & P & P & Theorem~\ref{theo:gamma:4:delta:3:poly}\\
			\cmidrule{7-7}
			2 &  &   & & P & P & \multirow{3}{*}{\vspace{-0.3cm}Theorem~\ref{theo:k2:poly:approx:K:XP}}\\
			\cmidrule{1-6}
			&  &  & - & - & $\frac{1}{k}$ Approx. &\\
			\cmidrule{1-6}
			&  &  & $\times$ & XP & & \\
			\bottomrule
		\end{tabular}
	\end{table}

	Section~\ref{sec:mcb} and \ref{sec:matroid} are dedicated to formal definitions and the proof that \MMCBI is indeed a subproblem of \MI. As a result we get the $\frac{1}{k}$- approximation algorithm, the polynomial case when $k = 2$ and the belonging of \MCBI to XP with respect to $K$. In Section~\ref{sec:hardness}, we prove the hardness results. Finally Section~\ref{sec:lowgammadeltacases} is dedicated to the cases where $\gamma = 3$ or where $\gamma = 4$ and $\Delta = 3$.

	\section{Formal Definitions}
	\label{sec:mcb}
	
	We adopt a standard definition of a cycle in a graph $G$ as any subgraph in which each vertex has an even degree\footnote{It's important to note that a cycle can then be composed of elementary cycles, which may seem counter-intuitive at first.}~\cite{HD87}. The sum of two cycles, denoted as $c_1 \oplus c_2$, is the subgraph containing the set of edges present in one and only one of the two cycles. In this paper, when $D$ represents a set of cycles, we denote the sum of cycles as $\bigoplus D = \bigoplus_{d \in D} d$. Note also that if $\bigoplus D = c$ then $\bigoplus D \oplus c = 0$. This general definition of cycles with the sum operation $\oplus$ defines a vector space in the field $\mathbb{Z}/2\mathbb{Z}$. A cycle basis of a graph $G$ is a linearly independent set $B$ of cycles that spans the cycle space of $G$. The terms \emph{span} and \emph{linearly independent} refer to the classical linear algebra definitions.
	\begin{itemize}
		\item $B$ is linearly independent if, for all $B' \subseteq B$, $\bigoplus B' \neq 0$.
		\item $B$ spans a cycle $d$ if there exists $B' \subseteq B$ such that $\bigoplus B' = d$. If $B$ is a basis, then the subset $B'$ is unique for each cycle $d$.
	\end{itemize}
	We define $\lambda_{B} : B \times \mathcal{C} \rightarrow \{0,1\}$ as the function such that if $B'$ is the subset of $B$ with $\bigoplus B' = d$, then $c \in B'$ if and only if $\lambda_{B}(c, d) = 1$. 
	
	The weight of a cycle is defined as its number of edges, denoted for a cycle $c$ by $\omega(c)$. The weight of a cycle basis $B$ is given by $\sum_{c \in B} \omega(c)$. Therefore, a minimum cycle basis is a cycle basis that minimizes its weight. The set of minimum cycle bases of a graph $G$ is denoted by~$\mcb{G}$. Polynomial time algorithms for finding a minimum cycle basis have been proposed~\cite{AR09,HD87}.
	
	\begin{problem}[\MMCBI] 
		Given a set of $k$ graphs $G_1, G_2, \dots, G_k$, find a subset of cycles $B$ of $\bigcap_{i = 1}^k G_i$ such that, for all $i \in \llbracket 1; k \rrbracket$, there exists $B_i \in \mathcal{MCB}(G_i)$ with $B \subseteq B_i$, and maximizing $|B|$.
	\end{problem}
	The decision problem \MCBI associated with \MMCBI is, given an integer $K$, to determine if there exists a solution with $|B| \geq K$.
	
	The rest of this section proves that the search for $B$ can be performed within a polynomial-size subset of cycles. We begin by presenting common lemmas on minimum cycle bases. Their proofs are provided in the Appendix~\ref{app:proofs}.
	
	\begin{restatable}{lemma}{exchangecyclesinbasis}
		\label{lem:exchangecyclesinbasis}
		Given $B$ a cycle basis of a graph $G$ with two cycles $c_1 \in B$ and $c_2 \notin B$, if $\lambda_B(c_1, c_2)~=~1$ then $(B \setminus \{c_1\}) \cup \{c_2\}$ is a cycle basis of $G$.
	\end{restatable}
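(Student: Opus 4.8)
The plan is to run the standard Steinitz exchange argument inside the cycle space of $G$, viewed as a vector space over $\mathbb{Z}/2\mathbb{Z}$ with the operation $\oplus$. Write $B^\star = (B \setminus \{c_1\}) \cup \{c_2\}$. Since $c_1 \in B$ and $c_2 \notin B$, the set $B^\star$ has exactly $|B|$ elements, and $|B|$ is the dimension of the cycle space of $G$ because $B$ is a cycle basis. Consequently it is enough to prove that $B^\star$ is linearly independent: an independent family whose cardinality equals the dimension of a finite-dimensional vector space is automatically a basis, hence a cycle basis of $G$.

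To establish linear independence I would argue by contradiction. Suppose some nonempty $D \subseteq B^\star$ satisfies $\bigoplus D = 0$. If $c_2 \notin D$, then $D \subseteq B \setminus \{c_1\} \subseteq B$, contradicting the linear independence of $B$. So assume $c_2 \in D$. From $\bigoplus D = 0$ and $c_2 \oplus c_2 = 0$ we get $\bigoplus (D \setminus \{c_2\}) = c_2$; thus $D \setminus \{c_2\}$ is a subset of $B$ that spans $c_2$, and it does not contain $c_1$ since $D \subseteq B^\star$ avoids $c_1$. On the other hand, because $B$ is a basis there is a \emph{unique} subset of $B$ whose sum is $c_2$, and the hypothesis $\lambda_B(c_1,c_2) = 1$ says precisely that this unique subset contains $c_1$. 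So $D \setminus \{c_2\}$ and that unique subset both sum to $c_2$, hence coincide — but one contains $c_1$ and the other does not, a contradiction. (Note this also covers the degenerate case $D \setminus \{c_2\} = \emptyset$, which would force the empty set to be the unique subset summing to $c_2$, again contradicting $\lambda_B(c_1,c_2)=1$.)

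I do not expect a genuine obstacle here; the only point that has to be stated carefully is the appeal to uniqueness of the representing subset, which is exactly the parenthetical remark accompanying the definition of \emph{spans} in the excerpt, together with the bookkeeping $|B^\star| = |B| = \dim$. If one prefers to avoid the dimension count altogether, the same ingredients give a direct proof that $B^\star$ spans: for an arbitrary cycle $d$, take the unique $B_d \subseteq B$ with $\bigoplus B_d = d$; if $c_1 \notin B_d$ we are done, and otherwise substitute $c_1 = c_2 \oplus \bigoplus (S \setminus \{c_1\})$, where $S \subseteq B$ is the unique subset summing to $c_2$ (which contains $c_1$), to rewrite $d$ as a sum of cycles all lying in $B^\star$. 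Either way the lemma follows, with the contradiction route being marginally shorter.
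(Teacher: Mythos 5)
Your proof is correct and is essentially the paper's argument: the paper also uses the Steinitz-exchange observation $c_1 = c_2 \oplus \bigoplus D$ together with the cardinality count $|B^\star|=|B|$, the only (immaterial) difference being that the paper verifies that $B^\star$ still \emph{spans} the cycle space, whereas your primary route verifies \emph{linear independence} via uniqueness of representation. Your closing alternative is exactly the paper's proof.
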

	
	\begin{restatable}{lemma}{generatecyclebylowerweight}
		\label{lem:generatecyclebylowerweight} 
		$B \in \mcb{G}$ if and only if $B$ is a cycle basis and for $c_1, c_2$ with $c_1 \in B$ and $c_2 \not\in B$ such that $\lambda_B(c_1, c_2) = 1$, we have $\omega(c_1) \leq \omega(c_2)$. 
	\end{restatable}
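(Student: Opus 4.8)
The plan is to prove the two implications separately. The direction ($\Rightarrow$) is a short contradiction built on the exchange Lemma~\ref{lem:exchangecyclesinbasis}; the direction ($\Leftarrow$) is the substantive part, proved by a weight-comparison argument that only uses the augmentation property of linearly independent families. For ($\Rightarrow$), assume $B \in \mcb{G}$; then $B$ is a cycle basis by definition. Suppose, for contradiction, that there are $c_1 \in B$ and $c_2 \notin B$ with $\lambda_B(c_1,c_2) = 1$ and $\omega(c_1) > \omega(c_2)$. By Lemma~\ref{lem:exchangecyclesinbasis}, $B' = (B \setminus \{c_1\}) \cup \{c_2\}$ is again a cycle basis, and since $c_2 \notin B$ it has the same cardinality as $B$ and weight $\omega(B') = \omega(B) - \omega(c_1) + \omega(c_2) < \omega(B)$, contradicting the minimality of $B$.

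For ($\Leftarrow$), let $B$ be a cycle basis satisfying the stated local condition and let $B^\star$ be an arbitrary minimum cycle basis; it suffices to show $\omega(B) \le \omega(B^\star)$, since this forces $B \in \mcb{G}$. Because the cycle space is a vector space over $\mathbb{Z}/2\mathbb{Z}$, every cycle basis has the same cardinality $n$; list $B = \{b_1,\dots,b_n\}$ and $B^\star = \{b^\star_1,\dots,b^\star_n\}$ with both sequences sorted by non-decreasing weight. I will prove that $\omega(b_i) \le \omega(b^\star_i)$ for every $i$, which yields $\omega(B) \le \omega(B^\star)$ by summation. Assume not, and let $i$ be the least index with $\omega(b_i) > \omega(b^\star_i)$. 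The linearly independent sets $\{b_1,\dots,b_{i-1}\}$ and $\{b^\star_1,\dots,b^\star_i\}$ have sizes $i-1 < i$, so by the augmentation property there is an index $j \le i$ with $b^\star_j \notin \{b_1,\dots,b_{i-1}\}$ and $\{b_1,\dots,b_{i-1},b^\star_j\}$ still independent; note $\omega(b^\star_j) \le \omega(b^\star_i) < \omega(b_i)$. Moreover $b^\star_j \notin B$, for otherwise $b^\star_j = b_m$ with $m \ge i$, forcing $\omega(b^\star_j) \ge \omega(b_i)$, a contradiction. Let $B' \subseteq B$ be the unique subset with $\bigoplus B' = b^\star_j$, i.e.\ $\lambda_B(c,b^\star_j) = 1 \iff c \in B'$. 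Since $\{b_1,\dots,b_{i-1},b^\star_j\}$ is independent, $b^\star_j$ is not in the span of $\{b_1,\dots,b_{i-1}\}$, hence $B' \not\subseteq \{b_1,\dots,b_{i-1}\}$; pick $c_1 \in B' \setminus \{b_1,\dots,b_{i-1}\}$, which is some $b_m$ with $m \ge i$, so $\omega(c_1) \ge \omega(b_i) > \omega(b^\star_j)$. Then $c_1 \in B$, $b^\star_j \notin B$, $\lambda_B(c_1,b^\star_j) = 1$ and $\omega(c_1) > \omega(b^\star_j)$, contradicting the local condition on $B$.

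I expect the ($\Leftarrow$) direction to be the main obstacle: one has to choose the right pair of independent sets to feed into the augmentation property, and then locate, inside the fundamental dependency $B'$ of the swapped cycle $b^\star_j$, an element of $B$ that is simultaneously heavy (of weight at least $\omega(b_i)$) and related by $\lambda_B$ to the light cycle $b^\star_j$. The forward direction and the bookkeeping with sorted weight sequences are routine.
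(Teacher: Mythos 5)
Your proof is correct, and the backward direction takes a genuinely different route from the paper's. The forward direction coincides with the paper's (a weight-decreasing swap via Lemma~\ref{lem:exchangecyclesinbasis} contradicting minimality). For the converse, the paper proves an intermediate exchange lemma: for any $c_2 \in B_2 \setminus B_1$ it partitions the fundamental set of $c_2$ in $B_1$ by weight ($D_1, D_2, D_3$), derives a linear dependence in $B_2$ if no equal-weight swap candidate exists, and then iteratively exchanges cycles to morph the minimum basis into $B$ without changing the total weight. You instead sort both bases by non-decreasing weight and prove the termwise domination $\omega(b_i) \leq \omega(b_i^\star)$ directly, using the augmentation property of linearly independent families to produce, whenever domination fails at index $i$, a light cycle $b_j^\star \notin B$ whose fundamental set in $B$ must contain a heavy element $b_m$ with $m \geq i$ --- contradicting the local condition. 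This is the classical ``locally optimal basis is globally optimal'' argument; it is more direct and avoids the iterated exchange entirely. What the paper's longer route buys is reuse: its intermediate exchange lemma immediately yields the basis exchange axiom (Lemma~\ref{lem:basisexchangeaxiom}), which is needed later to show $\mathcal{M}(G)$ is a matroid, whereas your termwise comparison would require separate work to recover that statement.
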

	
	\begin{restatable}{lemma}{basisexchangeaxiom}
		\label{lem:basisexchangeaxiom} 
		If $B_1, B_2 \in \mcb{G}$, for every $c_1 \in B_1 \backslash B_2$, there exists $c_2 \in B_2 \backslash B_1$ such that $(B_1 \backslash \{c_1\}) \cup \{c_2\}  \in \mcb{G}$. 
	\end{restatable}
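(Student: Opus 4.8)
The plan is to exhibit a suitable $c_2$ explicitly, then verify with Lemma~\ref{lem:exchangecyclesinbasis} that the exchange yields a cycle \emph{basis}, and finally verify with the weight characterization of Lemma~\ref{lem:generatecyclebylowerweight} that this basis is \emph{minimum}.

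\textbf{Choosing $c_2$.} Since $B_2$ is a cycle basis it spans $c_1$, and since it is a basis the witnessing subset is unique: writing $B' := \{d \in B_2 : \lambda_{B_2}(d,c_1)=1\}$ we have $\bigoplus B' = c_1$, and $c_1 \notin B'$ because $c_1 \notin B_2$. Expanding each $d \in B'$ in the basis $B_1$ and reading off the coefficient of $c_1$ on both sides of $\bigoplus B' = c_1$ — on the right it is $1$ because $c_1 \in B_1$ — gives $\sum_{d \in B'} \lambda_{B_1}(c_1,d) \equiv 1 \pmod 2$; in particular some $c_2 \in B'$ satisfies $\lambda_{B_1}(c_1,c_2)=1$. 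Fix such a $c_2$. Then $c_1 \neq c_2$ (only one of them belongs to $B_2$), and $\lambda_{B_1}(c_1,c_2)=1$ together with $c_1 \neq c_2$ forces $c_2 \notin B_1$; hence $c_2 \in B_2 \backslash B_1$, as required.

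\textbf{Cycle basis and minimality.} Lemma~\ref{lem:exchangecyclesinbasis} applied to $B_1$ with $c_1 \in B_1$, $c_2 \notin B_1$ and $\lambda_{B_1}(c_1,c_2)=1$ shows that $B_3 := (B_1 \backslash \{c_1\}) \cup \{c_2\}$ is a cycle basis of $G$. To see it is minimum it suffices, since $B_1 \in \mcb{G}$, to check $\omega(c_2) \le \omega(c_1)$: indeed $c_2 \in B'$ means $\lambda_{B_2}(c_2,c_1)=1$, and applying Lemma~\ref{lem:generatecyclebylowerweight} to $B_2 \in \mcb{G}$ with $c_2 \in B_2$, $c_1 \notin B_2$ yields exactly $\omega(c_2) \le \omega(c_1)$. (Symmetrically, Lemma~\ref{lem:generatecyclebylowerweight} applied to $B_1$ gives $\omega(c_1) \le \omega(c_2)$, so in fact $\omega(c_1)=\omega(c_2)$.) Hence $\omega(B_3) = \omega(B_1) - \omega(c_1) + \omega(c_2) \le \omega(B_1)$, and a cycle basis of weight at most that of the minimum cycle basis $B_1$ lies in $\mcb{G}$.

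\textbf{Main difficulty.} The delicate point is that the chosen $c_2$ must play two roles at once: the relation $\lambda_{B_1}(c_1,c_2)=1$ is what makes the exchange on $B_1$ produce a basis, while the membership $c_2 \in B'$ — equivalently $\lambda_{B_2}(c_2,c_1)=1$ — is what lets Lemma~\ref{lem:generatecyclebylowerweight} bound $\omega(c_2)$ by $\omega(c_1)$. A cycle picked for only one of these properties need not have the other, so the crux of the argument is the parity count over $B'$, which reconciles them. Everything else is routine linear algebra over $\mathbb{Z}/2\mathbb{Z}$ together with the two preceding lemmas.
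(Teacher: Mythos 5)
Your proof is correct. It reaches the same endpoint as the paper — a cycle $c_2 \in B_2 \setminus B_1$ satisfying \emph{both} $\lambda_{B_2}(c_2,c_1)=1$ and $\lambda_{B_1}(c_1,c_2)=1$, after which Lemma~\ref{lem:exchangecyclesinbasis} and the weight characterization of Lemma~\ref{lem:generatecyclebylowerweight} finish the job — but it gets there by a different mechanism. The paper routes the existence step through the technical Lemma~\ref{lem:generatecyclebylowerweight:2}: it expands $c_1$ in the other basis, stratifies the expansion by weight into the sets $D_1, D_2, D_3$, uses the weight bounds to show that no cycle in $D_1 \cup D_2$ can have $c_1$ in its own expansion, and then derives a contradiction with linear independence unless some equal-weight cycle in $D_3$ does. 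You instead run a single parity count over $\mathbb{Z}/2\mathbb{Z}$: reading off the $c_1$-coordinate of $\bigoplus B' = c_1$ after re-expanding each $d \in B'$ in $B_1$ immediately yields an odd number of $d$ with $\lambda_{B_1}(c_1,d)=1$. This is the standard symmetric-exchange argument for vector-space bases; it locates $c_2$ without any weight information, and the equality $\omega(c_1)=\omega(c_2)$ then falls out afterwards by applying Lemma~\ref{lem:generatecyclebylowerweight} in both directions. The trade-off is that your argument leans on both bases being minimum from the start (so that Lemma~\ref{lem:generatecyclebylowerweight} applies on each side), whereas the paper's intermediate lemma is stated for any pair of bases satisfying the weight condition and is reused elsewhere (it also carries the proof of the converse direction of Lemma~\ref{lem:generatecyclebylowerweight}). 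For the statement at hand, your version is shorter and arguably cleaner.
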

	
	Given $G$, let $\mathcal{M}(G) = (C, I)$ be the couple where $C$ is the set of cycles of $G$ and $I$ are the subsets $D$ of $C$ such that there exists $B \in \mcb{G}$ with $D \subseteq B$. 
	
	\begin{lemma}
		\label{lem:mg:matroid} 
		$\mathcal{M}(G)$ is a matroid.
	\end{lemma}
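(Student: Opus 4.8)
The plan is to check the three matroid axioms for $\mathcal{M}(G) = (C, I)$ --- namely $\emptyset \in I$, the hereditary property, and the independence augmentation (exchange) axiom --- leaning on the preceding lemmas, and in particular on Lemma~\ref{lem:basisexchangeaxiom}, which is precisely the basis-exchange property for the family $\mcb{G}$. This is, at bottom, the classical argument that a basis-exchange system defines a matroid, specialized to minimum cycle bases.

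First the easy parts. Since a minimum cycle basis of $G$ always exists (\cite{AR09,HD87}), $\mcb{G}\neq\emptyset$, and $\emptyset\subseteq B$ for any $B\in\mcb{G}$, so $\emptyset\in I$. For the hereditary property, if $D\in I$ and $D'\subseteq D$, pick $B\in\mcb{G}$ with $D\subseteq B$; then $D'\subseteq D\subseteq B$, so $D'\in I$. I would also record one structural observation: all members of $\mcb{G}$ have the same cardinality $r$, because the cycle space of $G$ over $\mathbb{Z}/2\mathbb{Z}$ is a finite-dimensional vector space and, by definition, every cycle basis is a linearly independent spanning set, hence a vector-space basis.

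The substance is the exchange axiom: given $D_1, D_2\in I$ with $|D_1|<|D_2|$, produce $c\in D_2\setminus D_1$ with $D_1\cup\{c\}\in I$. I would fix $B_2\in\mcb{G}$ with $D_2\subseteq B_2$ and, among all $B\in\mcb{G}$ with $D_1\subseteq B$, choose $B_1$ minimizing $|B_1\setminus(D_1\cup B_2)|$. The key claim is that this minimum is $0$, i.e. $B_1\subseteq D_1\cup B_2$: otherwise take $x\in B_1\setminus(D_1\cup B_2)\subseteq B_1\setminus B_2$, apply Lemma~\ref{lem:basisexchangeaxiom} to get $y\in B_2\setminus B_1$ with $B_1':=(B_1\setminus\{x\})\cup\{y\}\in\mcb{G}$; since $x\notin D_1$ we still have $D_1\subseteq B_1'$, and since $y\in B_2$ we get $|B_1'\setminus(D_1\cup B_2)| = |B_1\setminus(D_1\cup B_2)|-1$, contradicting minimality. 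Engineering this extremal choice so that Lemma~\ref{lem:basisexchangeaxiom} is applicable while $D_1$ stays inside the basis is the one genuinely delicate point; the rest is counting.

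Finally, from $B_1\subseteq D_1\cup B_2$ the conclusion follows by bookkeeping. We have $B_1\setminus B_2\subseteq D_1$, while $D_1\cap D_2\subseteq B_2$ is disjoint from $B_1\setminus B_2$ and also contained in $D_1$, so $|B_1\setminus B_2|+|D_1\cap D_2|\le |D_1|<|D_2|$. Using $|B_1|=|B_2|=r$ we get $|B_2\setminus B_1| = |B_1\setminus B_2|$, and since $D_2\setminus B_1\subseteq B_2\setminus B_1$ we get $|D_2| = |D_2\cap B_1|+|D_2\setminus B_1|\le |D_2\cap B_1|+|B_2\setminus B_1|$. Chaining these inequalities yields $|D_2\cap B_1|>|D_1\cap D_2|$, hence $D_2\cap B_1\not\subseteq D_1$; any $c\in(D_2\cap B_1)\setminus D_1$ then satisfies $c\in D_2\setminus D_1$ and $D_1\cup\{c\}\subseteq B_1\in\mcb{G}$, so $D_1\cup\{c\}\in I$, completing the verification.
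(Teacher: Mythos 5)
Your proof is correct and rests on the same two pillars as the paper's: the nonemptiness of $\mcb{G}$ and the basis-exchange property of Lemma~\ref{lem:basisexchangeaxiom}. The only difference is that the paper invokes the basis-axiom characterization of matroids as a known fact, whereas you unfold the standard cryptomorphism argument deriving the independence augmentation axiom from basis exchange; your extremal choice of $B_1$ and the subsequent counting are both sound.
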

	\begin{proof}
		Lemma~\ref{lem:basisexchangeaxiom} proves the basis exchange axiom. As $\mcb{G}$ is not empty (it possibly contains the empty set of $G$ is a tree), then $\mathcal{M}(G)$ satisfies the basis axioms and is then a matroid. 		
	\end{proof}
	
	Note that \MMCBI can be simply rewritten as the search for a maximum-size set of cycles that are independent in each matroid $\mathcal{M}(G_i)$ using the matroid terminology for \emph{independent}. In order to avoid confusion with the \emph{linear independency}, whenever referring to linear algebra independence, we will explicitly use that terminology. This proves that \MMCBI is indeed a subproblem of the matroid intersection problem (\MI). However, we cannot use algorithms dedicated to \MI to prove any polynomial complexity result as, currently, the ground sets of our matroids have exponential size. We address this in the next section; however, we introduce a polynomial time independency oracle for $\mathcal{M}(G)$. 
	
	\begin{lemma}
		\label{lem:mg:independenceoracle} 
		Given a subset $D$ of cycles of $G$, we can check in polynomial time if $D$ is independent in $\mathcal{M}(G)$.
	\end{lemma}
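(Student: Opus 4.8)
The plan is to reduce the oracle to computing, and comparing, two minimum-cycle-basis weights. Recall that, by the definition of $\mathcal{M}(G)$, the set $D$ is independent exactly when $D \subseteq B$ for some $B \in \mcb{G}$, i.e. exactly when some \emph{minimum}-weight cycle basis of $G$ contains $D$. First I would handle the purely linear-algebraic obstruction: viewing each cycle as its edge-incidence vector over $\mathbb{Z}/2\mathbb{Z}$, test by Gaussian elimination whether $D$ is linearly independent; if it is not, then $D$ is contained in no cycle basis and the oracle rejects. From now on assume $D$ is linearly independent, and let $M$ be the linear matroid on the whole cycle space of $G$: its bases are the cycle bases of $G$, and its minimum-weight bases, for the edge-count weight $\omega$, are exactly the members of $\mcb{G}$.

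Next I would use the standard weighted-matroid identity: the minimum weight of a basis of $M$ that contains the independent set $D$ equals $\omega(D) + \mathrm{mwb}(M/D)$, where $M/D$ is the contraction and $\mathrm{mwb}$ denotes the minimum basis weight. Hence $D$ is independent in $\mathcal{M}(G)$ if and only if $\omega(D) + \mathrm{mwb}(M/D) = W$, where $W$ is the weight of a minimum cycle basis of $G$; the value $W$ is obtained in polynomial time from the known minimum-cycle-basis algorithms \cite{AR09,HD87}. It therefore suffices to compute $\mathrm{mwb}(M/D)$ in polynomial time.

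For that I would run the matroid greedy on $M/D$: keep a set $F$ of cycles, initially empty, with $D \cup F$ linearly independent, and repeatedly add a minimum-weight cycle that does not lie in $\mathrm{span}(D \cup F)$, stopping when $D \cup F$ is a cycle basis; by the greedy theorem the final $\omega(F)$ equals $\mathrm{mwb}(M/D)$. The single delicate point is the selection step: given a subspace $V = \mathrm{span}(D \cup F)$ of the cycle space, find a minimum-weight cycle not in $V$. I would describe $V$ by the set of cycles orthogonal (for the edge-indexed bilinear form over $\mathbb{Z}/2\mathbb{Z}$) to every vector of a generating family $a_1,\dots,a_p$ of the annihilator of $V$, with $p \le m$; since the double annihilator recovers $V$ and every vector of the cycle space is itself a cycle, a cycle avoids $V$ iff it has odd intersection with some $a_i$. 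So the desired cycle is the lightest, over $i$, of ``a minimum-weight cycle whose edge set meets $\mathrm{supp}(a_i)$ in an odd number of edges'' --- which is exactly the classical shortest-non-orthogonal-cycle primitive of minimum-cycle-basis algorithms, computable in polynomial time. The greedy performs at most $\dim M$ iterations, each invoking this primitive at most $m$ times, so the whole oracle is polynomial.

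The main obstacle is precisely this third step, because the ground set of $\mathcal{M}(G)$ (and of $M/D$) is exponentially large, so the greedy cannot enumerate cycles; the work is to show (i) that ``lightest cycle outside a given subspace of the cycle space'' reduces to polynomially many calls of the shortest-non-orthogonal-cycle primitive, and (ii) that that primitive is polynomial --- the latter being a known result, the former a short $\mathbb{Z}/2\mathbb{Z}$-linear-algebra argument. Everything else --- linear-independence testing, the contraction identity, and correctness of the matroid greedy --- is routine. (Equivalently, one could phrase the algorithm as a single ``forcing'' greedy that builds a minimum cycle basis while giving the cycles of $D$ priority within their weight class, and then checks whether all of $D$ survived; correctness of this variant follows from the same exchange considerations as in Lemma~\ref{lem:generatecyclebylowerweight} and Lemma~\ref{lem:basisexchangeaxiom}.)
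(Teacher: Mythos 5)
Your proof is correct, but it takes a genuinely different route from the paper's. The paper runs a single modified Horton algorithm: it adds $D$ to Horton's candidate list $L$, sorts by weight with ties broken in favour of $D$, runs the usual greedy, and accepts iff every cycle of $D$ survives into the resulting basis --- exactly the ``forcing greedy'' you sketch in your closing parenthesis. You instead reduce the oracle to the weight comparison $\omega(D)+\mathrm{mwb}(M/D)=W$, and compute $\mathrm{mwb}(M/D)$ by a de Pina--style greedy whose selection step (lightest cycle outside a given subspace of the cycle space) you correctly reduce, via the double annihilator over $\mathbb{Z}/2\mathbb{Z}$, to polynomially many calls of the shortest-cycle-with-odd-intersection primitive. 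Both arguments are sound. The paper's version is shorter and gets the polynomial ground set for free from Horton's theorem, but its correctness (why the tie-breaking rule retains all of $D$ whenever some basis in $\mcb{G}$ contains $D$) is left implicit and ultimately rests on the same exchange lemmas you invoke. Your version is longer and needs the extra non-orthogonal-cycle primitive, but its correctness criterion is transparent --- just the matroid greedy theorem plus the standard contraction identity --- and it would work verbatim for any weighted binary matroid presented this way. One small point worth making explicit in your third step: the minimum-weight element of the cycle space having odd intersection with a fixed edge set is always attained by an elementary cycle (decompose any even subgraph into edge-disjoint elementary cycles; at least one of them has odd intersection and is no heavier), which is what licenses computing that primitive by shortest paths in the double cover.
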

	\begin{proof}
		This can be achieved by running a modified version of the Horton algorithm \cite{HD87}. Given a graph $G = (V, E)$, the Horton algorithm generates a minimum cycle basis by enumerating a list $L$ of $O(|V||E|)$ cycles, then sorting $L$ from the smallest cycles to the largest and finally using a greedy polynomial-time procedure to build a minimum cycle basis. To adapt this algorithm, we introduce a modification. Before the sorting step, we replace $L$ with $D \cup L$ and during the sorting process, in case of a tie, cycles from $D$ are given priority. The set $D$ is independent if and only if all cycles of $D$ are in the resulting basis.
	\end{proof}

	\section{Case with $k = 2$, Approximability and Parameterized Complexity}
	\label{sec:matroid}
	
	With Theorem~\ref{theo:candidateset}, we prove we may only focus on a polynomial subset of cycles.
	
	\begin{algorithm}
		\caption{Building the list of candidate cycles containing an optimal solution of $\MMCBI$}
		\label{algo:candidatecycle}
		\begin{algorithmic}[1]
			\Function{CandidatesList}{$G_1, G_2, \dots, G_k$}
			\State $L \gets \emptyset$
			\State $G = (V, E) \leftarrow \bigcap_{i = 1}^k G_i$
			\For{$u \in V$, $(v, w) \in E$} \label{algoline:candidatecycle:1} add to $L$ the cycle consisting in the edge $(v, w)$, a shortest path from $u$ to $v$ and a shortest path from $u$ to $w$ in $G$, if such a cycle is elementary.
			\EndFor
			\For{$(u, v) \in E$, $(w, x) \in E$} \label{algoline:candidatecycle:2} add to $L$ two cycles consisting in $(u, v)$, $(w, x)$, one with the shortest paths from $u$ to $w$ and from $v$ to $x$ in $G$, and a second with the shortest paths from $u$ to $x$ and from $v$ to $w$ in $G$, if such a cycle is elementary.
			\EndFor
			\State \Return $L$ 
			\EndFunction
		\end{algorithmic}
	\end{algorithm}
	
	\begin{theorem}
		\label{theo:candidateset}
		Given an instance $\{G_1, G_2, \dots, G_k\}$ of \MMCBI and let $L$ be the list returned by the function \textsc{CandidatesList}$(G_1, G_2, \dots, G_k)$ described in Algorithm~\ref{algo:candidatecycle}, there exists an optimal solution $B^*$ such that $B^* \subseteq L$.
	\end{theorem}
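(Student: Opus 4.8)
The plan is to start from an arbitrary optimal solution and rewrite its cycles, one at a time, into $L$ by a weight-based exchange argument. Concretely, among all pairs consisting of an optimal solution $B^*$ and a choice of minimum cycle bases $B_i \in \mcb{G_i}$ with $B^* \subseteq B_i$, I would fix one minimizing $\sum_{c \in B^*}\omega(c)$ and, as a secondary criterion, $|B^* \setminus L|$. Assume for contradiction that some $c \in B^* \setminus L$, and write $G = \bigcap_{i=1}^k G_i$, so that $c$ is a cycle of $G$ and, since $c \in B_i \in \mcb{G_i}$ for every $i$, $c$ is a relevant cycle of each $G_i$. The goal is to produce a cycle $c' \in L$ with $\omega(c') \le \omega(c)$ and $\omega(c \oplus c') < \omega(c)$; granting this, I would finish as follows. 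In each $B_i$, if $c \notin \operatorname{rep}_{B_i}(c')$ then $c \oplus c' = \bigoplus(\operatorname{rep}_{B_i}(c') \cup \{c\})$, so $c \in \operatorname{rep}_{B_i}(c\oplus c')$ and $c\oplus c' \notin B_i$, whence Lemma~\ref{lem:generatecyclebylowerweight} gives $\omega(c)\le\omega(c\oplus c')$, contradicting $\omega(c\oplus c')<\omega(c)$. Hence $\lambda_{B_i}(c,c') = 1$ for every $i$ (and in particular $c' \notin B_i \supseteq B^*$, so replacing preserves $|B^*|$). By Lemma~\ref{lem:exchangecyclesinbasis}, $B_i' := (B_i \setminus \{c\}) \cup \{c'\}$ is a cycle basis of $G_i$ of weight at most that of $B_i$, hence $B_i' \in \mcb{G_i}$; this holds simultaneously for all $i$, so $(B^* \setminus \{c\}) \cup \{c'\}$ is an optimal solution, and it has strictly smaller weight if $\omega(c') < \omega(c)$, or strictly smaller $|B^*\setminus L|$ if $\omega(c')=\omega(c)$ (using $c'\in L$, $c\notin L$) — contradiction either way.

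It remains to describe the structural heart: given such a $c$, find the candidate $c'$. I would analyze $c$ with a Horton-style shortcutting argument inside $G$ (in the spirit of \cite{HD87}). Fix an edge $e=(v,w)$ of $c$ and let $P = c\setminus\{e\}$; walking along $P$ from $v$, let $z$ be the last vertex whose prefix $P[v..z]$ is a shortest $v$--$z$ path of $G$, and symmetrically let $y$ (from the $w$ end) be the first vertex whose suffix $P[y..w]$ is a shortest $y$--$w$ path of $G$. Since triangles are produced on line~\ref{algoline:candidatecycle:1} of Algorithm~\ref{algo:candidatecycle} and thus lie in $L$, $c$ has at least four edges, so $z$ and $y$ are interior; replacing a non-shortest sub-path of $P$ by a genuine shortest path of $G$ yields a cycle strictly shorter than $c$, from which $c$ differs by a cycle of controlled weight. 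The crucial point is that, because $c$ sits in the minimum cycle basis $B_i$, Lemma~\ref{lem:generatecyclebylowerweight} forbids $c$ from "using", in its $B_i$-expansion, any cycle strictly shorter than itself; feeding the weights produced by the shortcut steps into this constraint — and doing so in every $B_i$ at once — I expect to force one of two situations: either $z$ and $y$ can be taken to coincide, so that $c = SP_G(z,v)\cup\{e\}\cup SP_G(z,w)$ up to the choice of shortest paths, a cycle of the family generated on line~\ref{algoline:candidatecycle:1}; or the analysis leaves a single "irreducible" sub-path of $c$ between two of its edges $e$ and $f$, and that sub-path is then forced (again by Lemma~\ref{lem:generatecyclebylowerweight}, applied with $f$ in place of $e$) to be a single edge, so that $c$ is generated from the two original edges $e,f$ together with two shortest paths of $G$ — a cycle of the family generated on line~\ref{algoline:candidatecycle:2}. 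In both cases one extracts a concrete $c' \in L$ with $\omega(c') \le \omega(c)$ and, by construction, $c$ and $c'$ agreeing on enough edges to guarantee $\omega(c \oplus c') < \omega(c)$.

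The main obstacle is exactly this structural step: proving that a cycle which is relevant in every $G_i$ is generated, within the intersection graph $G$, using at most two of its own edges plus shortest paths of $G$ — i.e., that one never needs three or more "irreducible" edges, and that the specific shortest paths chosen by \textsc{CandidatesList} can be made to match (or be swapped into) those arising in the decomposition. This is what genuinely uses the multi-graph structure: the naive single-graph Horton argument would decompose $c$ into several strictly shorter cycles, but those need not jointly extend the $B_i$'s, so the weight-minimality of $B^*$ together with Lemma~\ref{lem:generatecyclebylowerweight} in each $G_i$ must be invoked to stop the decomposition after at most two steps. Bounding the case analysis and handling the shortest-path bookkeeping carefully is where the real work lies; the exchange mechanics above are then routine given Lemmas~\ref{lem:exchangecyclesinbasis} and~\ref{lem:generatecyclebylowerweight}.
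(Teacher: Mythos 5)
Your overall strategy---take an optimal $B^*$ with bases $B_i \supseteq B^*$, and for each $c \in B^* \backslash L$ exhibit a candidate $c' \in L$ that can be swapped in by Lemmas~\ref{lem:exchangecyclesinbasis} and~\ref{lem:generatecyclebylowerweight}---is the same as the paper's, and your exchange mechanics at the end are sound. But the proof has a genuine gap, in two respects. First, the structural heart (actually producing $c'$) is only sketched: you say you ``expect to force one of two situations,'' and you yourself identify this as the main obstacle. That step is where the entire content of the theorem lies, and it is in fact much simpler than your iterated prefix-shortcutting scheme suggests. For an odd cycle $c$ with edge $(v,w)$, take $u$ to be the antipodal vertex, so the two arcs $P_1, P_2$ of $c$ from $u$ to $v$ and to $w$ both have length $(|c|-1)/2$; let $c'$ be the line~\ref{algoline:candidatecycle:1} candidate built from $(v,w)$, $P_1'$ and $P_2'$. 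Then $c = c_1 \oplus c' \oplus c_2$ with $c_j = P_j \oplus P_j'$ of weight at most $|c|-1$ and $\omega(c') \leq \omega(c)$, and Lemma~\ref{lem:generatecyclebylowerweight} forces $c'$ to be elementary with $\omega(c') = \omega(c)$, since otherwise $c$ is a sum of strictly lighter cycles and cannot lie in any minimum cycle basis. Even cycles are handled analogously with an antipodal pair of edges and line~\ref{algoline:candidatecycle:2}. No multi-graph interaction is needed at this stage.

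Second, and more substantively, the sufficient condition you reduce to---find $c'$ with $\omega(c') \leq \omega(c)$ and $\omega(c \oplus c') < \omega(c)$---is too strong and cannot always be met. If both arcs of $c$ differ from the shortest paths the algorithm happens to choose, then $c$ and $c'$ share only the edge $(v,w)$ and $\omega(c \oplus c') = 2\omega(c) - 2 > \omega(c)$; a $5$-cycle both of whose arcs admit alternative, internally disjoint shortest paths defeats your criterion for every line~\ref{algoline:candidatecycle:1} candidate on that edge, yet nothing prevents such a $c$ from lying in every $\mcb{G_i}$. What actually suffices, and what the paper proves, is that $c \oplus c' = c_1 \oplus c_2$ where each $c_j$ \emph{individually} has weight strictly less than $\omega(c)$: by Lemma~\ref{lem:generatecyclebylowerweight} the representation of each $c_j$ in $B_i$ then uses only cycles of weight at most $\omega(c_j) < \omega(c)$, hence contains neither $c$ nor $c'$, and $\lambda_{B_i}(c, c') = 1$ follows from $c' = c_1 \oplus c \oplus c_2$ with no bound on $\omega(c \oplus c')$ itself required. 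Replace your single-cycle condition by this two-cycle decomposition and supply the explicit antipodal construction; the rest of your argument then goes through.
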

	\begin{proof} 
		Let $B^*$ be any optimal solution and $B_i$ a minimum cycle basis of $G_i$ containing $B^*$ for every $i \in \llbracket 1; k \rrbracket$.
		
		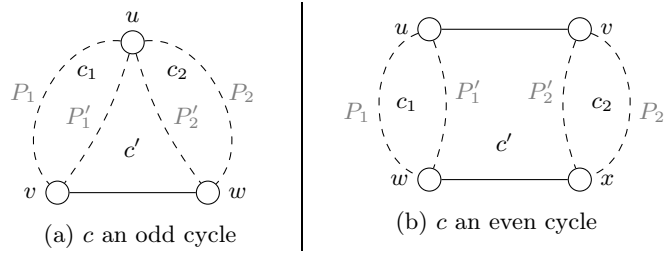
\begin{figure}[ht!]
			\tikzset{tinoeud/.style={draw, circle, minimum height=0.01cm}}
			\tikzstyle{cycle}=[]
			\tikzstyle{chemin}=[dashed]
			\tikzstyle{invisible}=[opacity=0]
			\centering
			\begin{subfigure}[r]{0.3 \linewidth}
				\begin{tikzpicture}
					\node[tinoeud, label={above:$u$}] (V1) at (0,0) {};
					\node[tinoeud, label={left:$v$}] (V2) at (-1,-2) {};
					\node[tinoeud, label={right:$w$}] (V3) at (1,-2) {};
					
					\draw[chemin] (V1) to[bend right=70] node[midway, left] {\color{gray} $P_1$} (V2);
					\draw[chemin] (V1) to[bend left=70] node[midway, right] {\color{gray} $P_2$} (V3);
					
					\draw[chemin] (V1) to[bend left=10] node[midway, left, xshift=-1, yshift=2] {\color{gray} $P_1'$} (V2);
					\draw[chemin] (V1) to[bend right=10] node[midway, right, xshift=1, yshift=2] {\color{gray} $P_2'$} (V3);
					
					\draw (V2) -- (V3);
					
					\draw[cycle] (-0.6, -0.4) node {$c_1$};
					\draw[cycle] (0, -1.4) node {$c'$};
					\draw[cycle] (0.6, -0.4) node {$c_2$};
				\end{tikzpicture}
				\caption{$c$ an odd cycle}
				\label{fig:candidateset:a}
			\end{subfigure}
			~ \vline ~~~
			\begin{subfigure}[l]{0.35 \linewidth}
				\begin{tikzpicture}
					\node[tinoeud, label={left:$u$}] (V1) at (-1,0) {};
					\node[tinoeud, label={right:$v$}] (V2) at (1,0) {};
					\node[tinoeud, label={left:$w$}] (V3) at (-1,-2) {};
					\node[tinoeud, label={right:$x$}] (V4) at (1,-2) {};
					
					\draw[chemin] (V1) to[bend right=70] node[midway, left, yshift=-3] {\color{gray} $P_1$}(V3);
					\draw[chemin] (V2) to[bend left=70] node[midway, right, yshift=-3] {\color{gray} $P_2$} (V4);
					
					\draw[chemin] (V1) to[bend left=20] node[midway, right, yshift=5] {\color{gray} $P_1'$} (V3);
					\draw[chemin] (V2) to[bend right=20]  node[midway, left, yshift=5] {\color{gray} $P_2'$} (V4);
					
					\draw (V1) -- (V2);
					\draw (V3) -- (V4);
					
					\draw[cycle] (-1.3, -1) node {$c_1$};
					\draw[cycle] (0, -1.5) node {$c'$};
					\draw[cycle] (1.3, -1) node {$c_2$};
				\end{tikzpicture}
				\caption{$c$ an even cycle}
				\label{fig:candidateset:b}
			\end{subfigure}
			
			\caption{Example of cycles such that $c_1 \oplus c' \oplus c_2 = c$. }
			
			\label{fig:candidateset}
		\end{figure}
		
		Let us suppose that there exists a cycle $c \in B^* \backslash L$. Hereinafter, we prove that $c$ can be replaced by a cycle $c' \in L \backslash B^*$ such that $B^* \backslash \{c\} \cup \{c'\}$ is still optimal.
		
		Assuming $c$ is an odd cycle containing an edge $(v, w)$. Let $P_1$ and $P_2$ be the two paths included in $c$ connecting respectively $v$ and $w$ to the same node $u$ such that $|P_1| = |P_2|$. At line~\ref{algoline:candidatecycle:1} of Algorithm~\ref{algo:candidatecycle}, when the loop enumerates $u$ and $(v, w)$, we obtain a cycle $c'$ that is added to $L$ if it is elementary. As depicted by Figure~\ref{fig:candidateset:a}, there exist two (possibly non-elementary) cycles $c_1$ and $c_2$ such that $c = c_1 \oplus c' \oplus c_2$. As $P_1'$ and $P_2'$ are shortest paths and as $|P_1| = |P_2| = (|c| - 1) / 2$, we have $\omega(c_1) < \omega(c), \omega(c') \leq \omega(c)$ and $\omega(c_2) < \omega(c)$. If $c'$ is not elementary, then $c$ is the sum of strictly smaller cycles, this contradicts Lemma~\ref{lem:generatecyclebylowerweight}. Consequently, $c'$ is elementary and is added to $L$. For the same reason, $\omega(c') = \omega(c)$.
		
		For all $i \in \llbracket 1 ; k \rrbracket$, let $B_i$ be a minimum cycle basis of $G_i$ containing $B^*$. By Lemma~\ref{lem:generatecyclebylowerweight}, every $d$ such that $\lambda_{B_i}(d, c_1) = 1$ has a weight $\omega(d) \leq \omega(c_1)$. Similarly for $c_2$. Thus those cycles cannot be $c'$ and $c$. As $c' = \bigoplus \{d : \lambda_{B_i}(d, c_1) \} \oplus \bigoplus \{d : \lambda_{B_i}(d, c_2) \} \oplus c$, then, $c' \not\in B_i$, otherwise $B_i$ is not linearly independent. 
		
		Consequently, $\omega(c) = \omega(c')$ and $\lambda_{B_i}(c, c') = 1$ for all basis $B_i$. We can then replace $c$ by $c'$ in $B_i$ by Lemma~\ref{lem:exchangecyclesinbasis}. The same property occurs for even cycles (see Figure~\ref{fig:candidateset:b}). This operation can be repeated until $B^* \subseteq L$. As the size of $B^*$ is unchanged, we get another optimal solution.		
	\end{proof}
	
	\MMCBI can now be seen as a subproblem of the matroid intersection problem, \MI. We deduce the following theorem.

	\begin{theorem}
		\label{theo:k2:poly:approx:K:XP}
		\MCBI and \MMCBI are polynomial when $k = 2$, \MMCBI is $\frac{1}{k}$-approximable and, finally \MCBI is XP with respect to $K$.
	\end{theorem}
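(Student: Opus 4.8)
The plan is to combine Theorem~\ref{theo:candidateset} with classical matroid intersection results, using the polynomial-time independence oracle from Lemma~\ref{lem:mg:independenceoracle}. First I would reduce every statement to a matroid intersection instance on the polynomial-size ground set $L = \textsc{CandidatesList}(G_1, \dots, G_k)$. For each $i$, the restriction $\mathcal{M}_i$ of the matroid $\mathcal{M}(G_i)$ to $L$ is again a matroid (the restriction of a matroid to a subset of its ground set is a matroid, with independent sets $\{ D \in I : D \subseteq L \}$), and a subset $D \subseteq L$ is independent in $\mathcal{M}_i$ if and only if it is independent in $\mathcal{M}(G_i)$; hence Lemma~\ref{lem:mg:independenceoracle} provides a polynomial-time independence oracle for each $\mathcal{M}_i$. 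By Theorem~\ref{theo:candidateset} there is an optimal solution of \MMCBI contained in $L$, so \MMCBI is exactly the problem of finding a maximum-size common independent set of $\mathcal{M}_1, \dots, \mathcal{M}_k$, and \MCBI asks whether this optimum is at least $K$.

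For $k = 2$ I would then invoke Edmonds' matroid intersection algorithm~\cite{Edmonds2003}: given independence oracles for two matroids on a common ground set of size $n$, it computes a maximum common independent set in time polynomial in $n$ and the oracle cost. Applied to $\mathcal{M}_1$ and $\mathcal{M}_2$ on $L$ this solves \MMCBI, and comparing the returned size with $K$ solves \MCBI; both run in polynomial time. For the approximation, I would run the greedy algorithm for the intersection of $k$ matroids analysed by Hausmann, Jenkyns and Korte~\cite{korte1978analysis}, which, using the oracles of the $\mathcal{M}_i$, outputs in polynomial time a common independent set of size at least $\frac{1}{k}$ times the optimum --- a $\frac{1}{k}$-approximation for \MMCBI.

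For membership of \MCBI in XP with respect to $K$, I would argue by brute force over $L$: since matroids are closed under taking subsets, a yes-instance admits a common independent set of size exactly $K$, and by Theorem~\ref{theo:candidateset} it can be taken inside $L$. It therefore suffices to enumerate the $\binom{|L|}{K} = O(|L|^K)$ subsets of $L$ of cardinality $K$ and to test each of them for independence in all $k$ matroids with the oracle of Lemma~\ref{lem:mg:independenceoracle}, which takes time $|L|^{O(K)} \cdot \mathrm{poly}(n)$ overall, i.e. XP in $K$. The only genuine subtlety is the passage to the restricted ground set $L$, which is precisely what Theorem~\ref{theo:candidateset} delivers; the remaining steps are black-box applications of known matroid algorithms, so I do not expect any serious obstacle.
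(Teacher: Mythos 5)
Your proposal is correct and follows essentially the same route as the paper's proof: restrict each $\mathcal{M}(G_i)$ to the candidate list $L$ via Theorem~\ref{theo:candidateset}, use Lemma~\ref{lem:mg:independenceoracle} as the independence oracle, then apply Edmonds' algorithm for $k=2$, the greedy $\frac{1}{k}$-approximation of \cite{korte1978analysis}, and brute-force enumeration of size-$K$ subsets of $L$ for XP membership. The only difference is that you spell out a few routine details (restriction of a matroid is a matroid, downward closure justifying the size-exactly-$K$ enumeration) that the paper leaves implicit.
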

	\begin{proof}
		\MMCBI consists in solving \MI in $\{\mathcal{M}(G_i), i \in \llbracket 1; k\rrbracket\}$. By Theorem~\ref{theo:candidateset}, we can restrict each matroid to $L$, the cycles output by Algorithm~\ref{algo:candidatecycle}. Let $\mathcal{M}(G_i)_{|L}$ be the resulting restricted matroid. By Lemma~\ref{lem:mg:independenceoracle} and Theorem~\ref{theo:candidateset} $\mathcal{M}(G_i)_{|L}$ is a matroid with a polynomial size ground set and a polynomial time independence oracle.  In that case \MI is polynomial when $k = 2$ \cite{Edmonds2003} and is $\frac{1}{k}$-approximable \cite{korte1978analysis}. Finally if $K$ is fixed, we can simply enumerate every subset of $L$ of size $K$ and check if that subset is independent in all the matroids. 
	\end{proof}

	\section{Hardness of \MCBI}
	\label{sec:hardness}
	This section gives the hardness proofs for \MMCBI and \MCBI. The latter is in NP as by Lemma~\ref{lem:mg:independenceoracle} we can check independence in polynomial time. 
	
	\begin{theorem}
		\label{theo:hardness}
		\MCBI is NP-Complete even if $k = 3$. In addition, \MCBI is W[1]-Hard with respect to $K$. Moreover, unless P = NP, for every $\varepsilon > 0$, there is no polynomial approximation algorithm with ratio $\frac{1}{k^{1 - \varepsilon}}$ for $\MMCBI$. All those results remain true even if $\Delta = 3$ and $\gamma = 5$ or if $\Delta = 4$ and $\gamma = 4$. 
	\end{theorem}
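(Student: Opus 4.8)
The plan is to establish all three statements by reductions that build the input graphs from small \emph{choice gadgets} — graphs whose complete list of minimum cycle bases is known and encodes a ``pick at most one'' decision — so that requiring a large set of cycles common to minimum cycle bases of several graphs forces these decisions to be globally consistent. The shared technical core is, for each elementary object $o$ to be selected, a short cycle $c_o$ realizable as a subgraph of every input graph, together with local structure that, through Lemma~\ref{lem:generatecyclebylowerweight}, fixes exactly which sets $\{c_o : o \in S\}$ extend to a minimum cycle basis. Each gadget will come in two flavours: one with $\Delta \le 4$ and all minimum-cycle-basis cycles of length $\le 4$, and one with $\Delta \le 3$ and cycles of length $\le 5$; this is what produces the two parameter regimes $(\Delta,\gamma)=(4,4)$ and $(3,5)$. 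Membership of \MCBI in NP has already been noted.

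For NP-completeness with $k=3$ I would reduce from an NP-hard problem with an intrinsic tripartite structure, for instance \textsc{3-Dimensional Matching} (or monotone one-in-three \textsc{3-Sat}). From triples over $X \cup Y \cup Z$ one creates a cycle $c_t$ per triple $t$ and three graphs $G_1, G_2, G_3$, where in $G_j$ the gadgets are grouped by the $j$-th coordinate, so that $\{c_t : t \in S\}$ extends to a minimum cycle basis of $G_j$ iff $S$ uses at most one triple per element of the $j$-th part. A set $B$ lying in a minimum cycle basis of each $G_j$ is then precisely a partial $3$-dimensional matching, and $|B| \ge K := |X|$ iff the instance is positive. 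One then verifies that the gadgets use only cycles available in all three graphs and, via Lemma~\ref{lem:generatecyclebylowerweight}, that no unintended cycle enters a minimum basis; bounding the degrees and cycle lengths of the two gadget flavours gives the $(\Delta,\gamma)$ claims.

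For the \textup{W[1]}-hardness with respect to $K$ and the $\tfrac{1}{k^{1-\varepsilon}}$-inapproximability I would use a single reduction from \MISlong on a graph $H=(V_H,E_H)$: introduce a cycle $c_w$ per vertex $w$, and realize the conflict constraints (``not both endpoints of an edge'') by partition matroids, using one graph per colour class of a proper edge colouring of $H$ (a matching, hence a genuine partition matroid on the $c_w$), so that the number of graphs is $k = \chi'(H) \le |V_H|$. A set of cycles common to minimum cycle bases of all these graphs is then exactly an independent set of $H$ of the same size, so a solution $B$ of size $K$ exists iff $H$ has an independent set of size $K$, and the optimum equals $\alpha(H) \le |V_H|$. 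Since \MIS is \textup{W[1]}-hard parameterized by solution size, \MCBI is \textup{W[1]}-hard parameterized by $K$; and since \MIS has no $|V_H|^{1-\varepsilon}$-approximation unless $\mathrm P = \mathrm{NP}$, while here $k \le |V_H|$ and the optimum is $\alpha(H) \le |V_H|$, any $\tfrac{1}{k^{1-\varepsilon}}$-approximation for \MMCBI would be a $|V_H|^{1-\varepsilon}$-approximation for \MIS, a contradiction. The two gadget flavours again give the two $(\Delta,\gamma)$ regimes.

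The step I expect to be the real obstacle is the gadget construction and its verification: one must exhibit, for each regime, graphs whose \emph{entire} family of minimum cycle bases realizes a prescribed partition matroid over cycles that can simultaneously be placed in every input graph, all while keeping $\Delta \le 4$ (resp.\ $3$) and $\gamma \le 4$ (resp.\ $5$) — the $\gamma \le 4$ case being the tighter one, since triangles are then forbidden. The delicate work is the weight bookkeeping needed to show, via Lemma~\ref{lem:generatecyclebylowerweight}, that assembling gadgets creates no short ``crossing'' cycle that would slip into a minimum cycle basis and break either the encoded constraints or the degree and length bounds. Composing the gadgets without interference, reading off $K$, and transporting NP-hardness, \textup{W[1]}-hardness and inapproximability through the reductions are then routine.
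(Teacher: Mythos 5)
Your overall architecture for the W[1]-hardness and inapproximability claims coincides with the paper's: a reduction from \MISlong with one cycle $c_w$ per vertex, a pairwise ``not both'' conflict gadget per edge, and one input graph per colour class of a proper edge colouring so that $k=\chi'(H)\le |V_H|$ while the optimum equals $\alpha(H)$. However, there is a genuine gap: the conflict gadget itself, which you explicitly defer as ``the real obstacle,'' is the entire technical content of the theorem and cannot be treated as routine. The paper exhibits it concretely (procedure $CONN$): two disjoint $4$-cycles joined by eight edges forming eight triangles (so $\Delta=4$, $\gamma=4$), or two disjoint $5$-cycles joined by five rungs forming a pentagonal prism whose five squares lie in every minimum cycle basis (so $\Delta=3$, $\gamma=5$); in each case the sum of the short cross cycles equals $c_1\oplus c_2$, so by Lemma~\ref{lem:generatecyclebylowerweight} at most one of the two long cycles can appear in a minimum cycle basis, which is exactly the needed conflict. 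Without producing such graphs and verifying their full set of minimum cycle bases, the reduction is a schema, not a proof --- and the verification is not mere ``weight bookkeeping,'' since one must show the cross cycles are forced into every minimum basis and that exactly one of $c_1,c_2$ can complete it.

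The $k=3$ part of your proposal has a second, more structural problem. Reducing from \textsc{3-Dimensional Matching} requires, inside a single graph $G_j$, a gadget enforcing ``at most one cycle from a group'' where the group $\{c_t : x\in t\}$ can be arbitrarily large. A pairwise conflict gadget of the $CONN$ type already saturates the degree budget of every vertex of both cycles it touches (raising degrees to $3$ or $4$), so you cannot chain it across all pairs of a large group while keeping $\Delta\le 3$ (or $4$) and $\gamma\le 5$ (or $4$); you would need a genuinely new ``one-of-many'' gadget, which you do not supply and which is not obviously constructible under these constraints. The paper sidesteps this entirely: it keeps the \MIS reduction but restricts to instances $H$ of maximum degree $3$ (with the distance condition of \cite{Murphy1992}) on which \MIS remains NP-complete; such $H$ is $3$-edge-colourable, each colour class is a matching, hence each of the three graphs only needs vertex-disjoint pairwise conflicts, and the same $CONN$ gadget suffices. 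I would recommend abandoning the 3DM route and folding the $k=3$ claim into the edge-colouring framework you already set up for the inapproximability argument.
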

	\begin{proof}
		We provide a reduction from the \emph{Maximum Independent Set} in a graph that consists, given a graph $H = (V, E)$ and an integer $K'$, in the search for an independent set of $H$ of size at least $K'$, that is a subset in which no pair of nodes is linked by an edge in $E$. In order to avoid confusions with the word \emph{independent}, we then use the \emph{Stable set} terminology instead of Independent set. 
		
		The reduction does not depend on the value of $\gamma$ except for a useful procedure described hereinafter and in Figure~\ref{fig:np-hardness:CONN}. Let $c_1 = (u_1, u_2, \dots, u_l)$ and $c_2 = (v_1, v_2, \dots, v_l)$ be two disjoint cycles where $l \in \{4, 5\}$. The procedure $CONN(c_1, c_2)$ connects the nodes of the two cycles. We add the edges $(u_i, v_i)$ for $i \in \llbracket 1; l \rrbracket$ and, if $l = 4$, we add the edges $(u_1, v_2), (u_2, v_3), (u_3, v_4)$ and $(u_4, v_1)$. Note that, by Lemma~\ref{lem:generatecyclebylowerweight}, after using the procedure, no more minimum cycle basis containing $c_1$ and $c_2$ at the same time. 
		
		\begin{figure}[ht!]
			\centering
			\begin{subfigure}{0.45\linewidth}
				\centering
				\begin{tikzpicture}[scale=0.5]
					\tikzset{tinoeud/.style={draw, circle, minimum height=0.01cm}}
					\tikzset{titerminal/.style={draw, rectangle, minimum height=0.01cm}}
					
					\node[tinoeud] (V1) at (0,0) {};
					\node[tinoeud] (V2) at (2,0) {};
					\node[tinoeud] (V3) at (2,2) {};
					\node[tinoeud] (V4) at (0,2) {};
					
					\node[tinoeud] (W1) at (-1,-1) {};
					\node[tinoeud] (W2) at (3,-1) {};
					\node[tinoeud] (W3) at (3,3) {};
					\node[tinoeud] (W4) at (-1,3) {};
					
					\draw (V1) -- (V2) -- (V3) -- (V4) -- (V1);
					\draw (W1) -- (W2) -- (W3) -- (W4) -- (W1);
					
					\draw (V1) -- (W1) -- (V2);
					\draw (V2) -- (W2) -- (V3);
					\draw (V3) -- (W3) -- (V4);
					\draw (V4) -- (W4) -- (V1);
					
					\draw (1, 1) node {$c_2$};
					\draw (-2, 1) node {$c_1$};
				\end{tikzpicture}
				\caption{$\gamma = 4$}
				\label{subfig:np-hardness:CONN:1}
			\end{subfigure}
			\begin{subfigure}{0.45\linewidth}
				\centering
				\begin{tikzpicture}[scale=0.55]
					\tikzset{tinoeud/.style={draw, circle, minimum height=0.01cm}}
					\tikzset{titerminal/.style={draw, rectangle, minimum height=0.01cm}}
					
					\node[tinoeud] (V1) at (0:1) {};
					\node[tinoeud] (V2) at (72:1) {};
					\node[tinoeud] (V3) at (144:1) {};
					\node[tinoeud] (V4) at (216:1) {};
					\node[tinoeud] (V5) at (288:1) {};
					
					\node[tinoeud] (W1) at (0:2) {};
					\node[tinoeud] (W2) at (72:2) {};
					\node[tinoeud] (W3) at (144:2) {};
					\node[tinoeud] (W4) at (216:2) {};
					\node[tinoeud] (W5) at (288:2) {};
					
					\draw (V1) -- (V2) -- (V3) -- (V4) -- (V5) -- (V1);
					\draw (W1) -- (W2) -- (W3) -- (W4) -- (W5) -- (W1);
					
					\draw (V1) -- (W1);
					\draw (V2) -- (W2);
					\draw (V3) -- (W3);
					\draw (V4) -- (W4);
					\draw (V5) -- (W5);
					
					\draw (0, 0) node {$c_2$};
					\draw (-2.5, 0) node {$c_1$};
				\end{tikzpicture}
				\caption{$\gamma = 5$}
				\label{subfig:np-hardness:CONN:2}
			\end{subfigure}
			\caption{Illustration of the procedure $CONN(c_1, c_2)$}
			\label{fig:np-hardness:CONN}
		\end{figure}
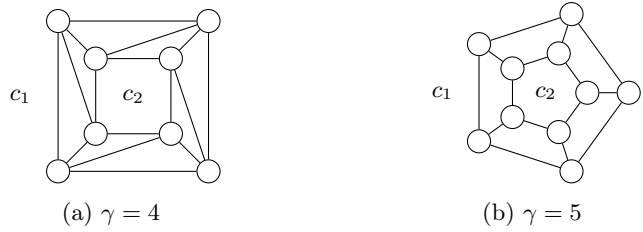

		We now describe a general reduction with more than 3 graphs and then show how to reduce $k$. Let then $(H = (V, E), K')$ be an instance of the maximum Stable set problem. We create an instance of \MCBI as follows. We set $l$ either to 4 or 5. We set $K = K'$. For each edge $e \in E$, we build a graph $G_e$. For each node $v \in V$, we add a cycle $c_v$ of size $l$ to all the graphs. All the cycles are disjoint. Finally, if $e = (u, v)$, then, in the graph $G_e$, we connect $c_u$ and $c_v$ using the procedure $CONN(c_u, c_v)$. Every other cycle of $G_e$ is not connected to the rest of the graph and is its own connected component. 
		
		Note that $l = \gamma$ and if $\gamma = 4$, then $\Delta = 4$ and if $\gamma = 5$ then $\Delta = 3$. Note also that only the cycles $\{c_v | v \in V\}$ belong to the intersection of the graphs $\{G_e | e \in E\}$. A feasible solution contains only those cycles.
		
		Given an edge $e = (u, w) \in E$, the graph $G_e$ contains exactly two minimum cycles bases. If $\gamma = 4$ (resp. $5$), let $D$ be the set of triangles (resp. squares) connecting $c_u$ and $c_w$. Then $\mcb{G_e} = \{D \cup \{c_v | v \in V \backslash \{w\}\}, D \cup \{c_v | v \in V \backslash \{u\}\}\}$. As a consequence, for every subset of nodes $V' \subset V$, the set $\{c_v | v \in V'\}$ is independent in $G_e$ if and only if $u \not\in V'$ or $w \not\in V'$. Thus, there exists a feasible set of cycles of size $K$ if and only if there exists a stable set of size $K$ in $H$. As the Maximum Stable Set is NP-Complete and W[1]-Hard with respect to $K'$, this reduction proves the NP-Completeness of \MCBI and the W[1]-Hardness with respect to $K$. 
		
		Now note that, if $e = (u, v)$ and $f = (u', v')$ are two non-incident edges, then instead of $G_e$ and $G_f$ we can add to the instance a graph $G_{(e, f)}$ containing the union of $G_e$ and $G_f$. As the two edges are not incident, we have that a feasible set of cycles cannot contain $c_u$ and $c_v$ at the same time and the same for $c_{u'}$ and $c_{v'}$. Note that this transformation does not change the values of $\gamma$ and $\Delta$. We can extend this idea to any matching $M$ of $H$. By the Vizing Theorem \cite{MG92}, using a polynomial greedy algorithm, edges of $H$ can be covered with $|V|$ disjoint matchings. This reduces the number $k$ of graphs from $|E|$ to $|V|$. By \cite{ZD06}, unless P = NP, for all $\varepsilon > 0$, there is no polynomial approximation with ratio $\frac{1}{|V|^{1 - \varepsilon}}$ for the Maximum Stable Set problem. This proves the inapproximability result.
		
		The Maximum Stable Set remains NP-Complete even if $H$ has degree at most 3 and if any path linking two nodes with degree 3 contains at least 3 edges \cite{Murphy1992}. Such a graph is always 3-edge-colorable. This proves the NP-Completeness of \MCBI when $k = 3$.	
	\end{proof}

	\section{Cases $\gamma = 3$, and $\gamma = 4, \Delta = 3$}
	\label{sec:lowgammadeltacases}
	
	We describe two useful lemmas to prove that we can focus on the cycles size by size independently. In this section, we call $L$ the list returned by Algorithm~\ref{algo:candidatecycle}, and we denote by $T(G)$ and $S(G)$ the triangles and squares of a graph $G$.
	
	\begin{lemma}
		\label{lem:exchangesamesizecycle} 
		If $B_1, B_2 \in \mcb{G}$, for every $l \in \mathbb{N}$, $\{c \in B_1 | \omega(c) \neq l\} \cup \{c \in B_2 | \omega(c) = l\} \in \mcb{G}$. 
	\end{lemma}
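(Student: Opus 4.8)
The plan is to transform $B_1$ into the target set $B^{\star} := \{c \in B_1 \mid \omega(c) \neq l\} \cup \{c \in B_2 \mid \omega(c) = l\}$ by a finite sequence of elementary exchanges, each produced by Lemma~\ref{lem:basisexchangeaxiom}, so that every intermediate set stays in $\mcb{G}$ and \emph{only the weight-$l$ layer is ever modified}. First I would record an auxiliary fact: all minimum cycle bases of $G$ share the same multiset of cycle weights. This follows from Lemma~\ref{lem:basisexchangeaxiom}: given $B, B' \in \mcb{G}$ with $B \neq B'$, pick $c_1 \in B \setminus B'$ and let $c_2 \in B' \setminus B$ be the cycle the lemma provides, so that $(B \setminus \{c_1\}) \cup \{c_2\} \in \mcb{G}$; since this set and $B$ are both of minimum weight, $\omega(c_1) = \omega(c_2)$, and $\bigl|\bigl((B \setminus \{c_1\}) \cup \{c_2\}\bigr)\setminus B'\bigr|$ is strictly smaller than $|B\setminus B'|$, so iterating reaches $B'$ through weight-preserving single swaps. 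In particular, for every $l \in \mathbb{N}$ and every $B \in \mcb{G}$ we have $|\{c \in B \mid \omega(c) = l\}| = |\{c \in B_2 \mid \omega(c) = l\}|$.

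Then comes the main loop. I maintain a set $B \in \mcb{G}$ together with the invariant $\{c \in B \mid \omega(c) \neq l\} = \{c \in B_1 \mid \omega(c) \neq l\}$, initialised with $B = B_1$. While $\{c \in B \mid \omega(c) = l\} \neq \{c \in B_2 \mid \omega(c) = l\}$, the cardinality equality above forces the existence of some $c_1 \in B \setminus B_2$ with $\omega(c_1) = l$ (two equal-size distinct finite sets each contain an element absent from the other). Apply Lemma~\ref{lem:basisexchangeaxiom} to the pair $(B, B_2)$ with this $c_1$: it yields $c_2 \in B_2 \setminus B$ with $B' := (B \setminus \{c_1\}) \cup \{c_2\} \in \mcb{G}$, and comparing the weights of $B$ and $B'$ gives $\omega(c_2) = \omega(c_1) = l$. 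Replacing $B$ by $B'$ preserves the invariant, because only cycles of weight $l$ were swapped, and it strictly decreases the non-negative integer $\bigl|\{c \in B \mid \omega(c) = l\} \setminus B_2\bigr|$ by exactly one. Hence the loop terminates; at that point $\{c \in B \mid \omega(c) = l\} \subseteq \{c \in B_2 \mid \omega(c) = l\}$, the cardinality equality upgrades this inclusion to an equality, and combined with the invariant we get $B = B^{\star}$, so $B^{\star} \in \mcb{G}$. (When no cycle of weight $l$ occurs in $B_1$ or $B_2$ the loop runs zero times and $B^{\star} = B_1$ trivially.)

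The only delicate point is ensuring the exchanges never disturb cycles of weight $\neq l$, which is what keeps the invariant alive; this is secured by two facts acting together — Lemma~\ref{lem:basisexchangeaxiom} lets us \emph{choose} which cycle leaves the current basis (so we always pick one of weight $l$), and the minimality of both $B$ and $B'$ forces the incoming cycle to have that same weight $l$. Everything else is bookkeeping on the potential $\bigl|\{c \in B \mid \omega(c)=l\} \setminus B_2\bigr|$. One could instead argue directly inside the cycle matroid $\mathcal{M}(G)$: $\{c \in B_1 \mid \omega(c) < l\}$ extends, within the span of the cycles of weight at most $l$, to a basis of that span using only weight-$l$ cycles of $B_2$, which (by the cardinality equality and the threshold characterisation of minimum-weight bases implied by Lemma~\ref{lem:generatecyclebylowerweight}) must be exactly $\{c \in B_2 \mid \omega(c) = l\}$, and then $\{c \in B_1 \mid \omega(c) > l\}$ completes it to a basis of the whole cycle space of the right weight; I prefer the exchange argument since it relies on Lemma~\ref{lem:basisexchangeaxiom} alone.
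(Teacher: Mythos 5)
Your proof is correct and follows essentially the same route as the paper: repeatedly apply Lemma~\ref{lem:basisexchangeaxiom} to swap weight-$l$ cycles of the current basis out for weight-$l$ cycles of $B_2$ (weight preservation being forced by minimality), then use the fact that all minimum cycle bases have the same number of cycles of each weight to conclude the layers coincide. Your write-up is more careful than the paper's — the explicit invariant, the decreasing potential for termination, and the separately proved weight-multiset equality are all details the paper leaves implicit — but the underlying argument is identical.
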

	\begin{proof}
		Using Lemma~\ref{lem:basisexchangeaxiom}, for every $c \in B_1 \backslash B_2$, there is some $d \in B_2 \backslash B_1$ such that $B_1 \backslash \{c\} \cup \{d\} \in \mcb{G}$. Note that $\omega(c) = \omega(d)$. We can then swap the cycles until all the cycles of $B$ with size $l$ are in $B_2$. After the exchanges, $|\{c \in B | \omega(c) = l \}| = |\{c \in B_2 | \omega(c) = l\}|$, otherwise, the basis exchange axiom would be false for $B$ and $B_2$. Then $B = \{c \in B_1 | \omega(c) \neq l\} \cup \{c \in B_2 | \omega(c) = l\}$. 
	\end{proof}
	
	Using Lemma~\ref{lem:exchangesamesizecycle} we see that all the subset of cycles of same size in two bases are interchangeable. This means that, from two feasible solutions $B_3$ and $B_4$ respectively maximizing the number of triangles and squares, we can build a feasible solution maximizing both of them. We define with $\mathcal{M}(G, l)$ the matroid $\mathcal{M}(G)$ retricted to the cycles of size $l$ in $L$. Maximizing the triangles (respectively squares) consists in finding a maximum set of cycles that are independent in $\mathcal{M}(G, 3)$ (resp. $\mathcal{M}(G, 4)$). The following lemma gives a characterization of the independent sets. Due to lack of space, the proof may be found in Appendix~\ref{app:proofs}.
	
	\begin{restatable}{lemma}{circuit}
		\label{lem:circuit}
		$B$ is independent in $\mathcal{M}(G, l)$ if and only if, for all $D \subseteq B$,\\ $\bigoplus D\not\in span(\text{cycle } c | \omega(c) \leq l - 1)$. 
	\end{restatable}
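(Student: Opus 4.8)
The statement characterizes independence in the restricted matroid $\mathcal{M}(G,l)$ in terms of linear (over $\mathbb{Z}/2\mathbb{Z}$) span of the short cycles. The natural route is to establish both implications through Lemma~\ref{lem:generatecyclebylowerweight}, which already tells us that a cycle basis $B$ is minimum if and only if no cycle in $B$ can be rewritten, via the basis, using strictly lighter cycles. First I would fix notation: let $\calS_{<l} = \operatorname{span}(\text{cycle } c \mid \omega(c) \le l-1)$ be the subspace of the cycle space generated by cycles of weight at most $l-1$, and recall that all cycles of $B$ here have weight exactly $l$ (they come from $L$ and we are in $\mathcal{M}(G,l)$).

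For the forward direction, suppose $B$ is independent in $\mathcal{M}(G,l)$, so there is a minimum cycle basis $B^*$ of $G$ with $B \subseteq B^*$. Assume for contradiction that some $D \subseteq B$ has $\bigoplus D \in \calS_{<l}$, i.e. $\bigoplus D = \bigoplus E$ for some set $E$ of cycles each of weight $\le l-1$. Pick any $c \in D$; then $c = \bigoplus(D \setminus\{c\}) \oplus \bigoplus E$, a sum of cycles all of weight $\le l$ with at least... more care is needed: I would instead expand each short cycle in $E$ and each cycle of $D\setminus\{c\}$ in the basis $B^*$, and argue that in the resulting symmetric-difference expression of $c$ over $B^*$, some cycle of weight $\le l-1$ must appear with coefficient $1$ (because the weight-$l$ contributions from $D\setminus\{c\}$ are exactly the elements of $D\setminus\{c\}$ themselves, which cancel against nothing they shouldn't, while the strictly lighter cycles cannot all cancel — otherwise $\bigoplus E$ would lie in the span of $B\setminus\{c\}$, forcing $\bigoplus D = \mathbf{0}$, contradicting linear independence of $B\subseteq B^*$). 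This produces a cycle $d\in B^*$ with $\lambda_{B^*}(d,c)=1$ and $\omega(d)\le l-1 < l = \omega(c)$, contradicting Lemma~\ref{lem:generatecyclebylowerweight}. The main obstacle here is the bookkeeping of which cycles survive in the symmetric difference; the clean way is to work in the quotient cycle space modulo $\calS_{<l}$, where each weight-$l$ cycle's image is well-defined and $B$'s images must be linearly independent — then $\bigoplus D \in \calS_{<l}$ is exactly the statement that the images of $D$ sum to zero, i.e. $B$'s images are dependent, which I will show is incompatible with $B$ extending to a minimum basis.

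For the converse, suppose that for every $D\subseteq B$ we have $\bigoplus D \notin \calS_{<l}$. In particular (taking singletons and using that every weight-$l$ cycle of $L$ lies in some minimum basis, or directly) each element of $B$ is not in $\calS_{<l}$, and the images of $B$ in the quotient space are linearly independent. I would then build a minimum cycle basis containing $B$ greedily: start from a minimum cycle basis $B_0$ of $G$ (which exists), and use Lemma~\ref{lem:exchangesamesizecycle} to replace the weight-$l$ part of $B_0$ by a set that contains as many elements of $B$ as possible among all minimum bases; if some $c\in B$ is still missing, then $c$ together with the weight-$\le l$ cycles of the current basis is linearly dependent, so $c = \bigoplus F$ for some $F$ among those cycles, and since $c\notin\calS_{<l}$ at least one weight-$l$ cycle $c'$ of the basis appears in $F$ with $c'\notin B$ (else $\bigoplus(\{c\}\cup(F\cap B))\in\calS_{<l}$, contradicting the hypothesis applied to $D=\{c\}\cup(F\cap B)$ after noting $\bigoplus D = \bigoplus(F\setminus B)\in\calS_{<l}$). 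Then $\lambda(c',c)=1$ with $\omega(c')=\omega(c)$, so Lemma~\ref{lem:exchangecyclesinbasis} lets us swap $c'$ out and $c$ in, and Lemma~\ref{lem:generatecyclebylowerweight} confirms the new basis is still minimum (weights are equal, and lighter cycles are untouched). This strictly increases the count of $B$-elements in the basis, contradicting maximality; hence $B$ is contained in a minimum basis, i.e. independent in $\mathcal{M}(G,l)$. I expect the delicate point to be justifying that the greedy exchange never needs to disturb cycles of weight $<l$ (handled by Lemma~\ref{lem:exchangesamesizecycle}) and that the exchanged-out cycle can always be chosen outside $B$ — both of which follow from the hypothesis once phrased in the quotient space.
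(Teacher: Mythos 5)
Your proof takes essentially the same route as the paper's: the forward direction extends $B$ to a minimum cycle basis $B^*$, uses Lemma~\ref{lem:generatecyclebylowerweight} to expand the light cycles of $E$ over $B^*$ into basis elements of weight at most $l-1$, and thereby obtains a second representation of $\bigoplus D$ avoiding $D$; the converse performs the same greedy weight-$l$ exchange via Lemmas~\ref{lem:generatecyclebylowerweight} and~\ref{lem:exchangecyclesinbasis} (and you are in fact more explicit than the paper about why the swapped-out cycle can be chosen outside $B$). The one slip is the closing sentence of your forward direction: since $c \in D \subseteq B^*$, one has $\lambda_{B^*}(d,c)=1$ only for $d=c$, so the contradiction is not with Lemma~\ref{lem:generatecyclebylowerweight} applied to $c$ but with the linear independence (uniqueness of representation) of $B^*$ --- which is exactly the contradiction the paper states and which your own parenthetical and quotient-space formulation already supply.
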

	
	\begin{theorem}
		\label{theo:gamma:3:poly}
		Given an instance of \MMCBI, one can find a feasible solution maximizing the number of triangles in polynomial time. As a consequence, \MCBI and \MMCBI are polynomial when $\gamma = 3$.
	\end{theorem}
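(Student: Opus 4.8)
The plan is to show that, for triangles, the matroid intersection degenerates to a single matroid that is easy to optimise.

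First I would invoke Lemma~\ref{lem:circuit} with $l = 3$. Since the input graphs are simple, there is no nonzero cycle of weight at most $2$, so $span(\text{cycle } c \mid \omega(c) \le 2) = \{0\}$ and the lemma reduces to: $B$ is independent in $\mathcal{M}(G_i,3)$ if and only if $\bigoplus D \neq 0$ for every nonempty $D \subseteq B$, i.e.\ if and only if $B$ is linearly independent. In other words $\mathcal{M}(G_i,3)$ is nothing but the linear matroid over $\mathbb{Z}/2\mathbb{Z}$ spanned by the triangles of $G_i$ that occur in $L$; and every triangle of $\bigcap_i G_i$ does occur in $L$ (at line~\ref{algoline:candidatecycle:1} of Algorithm~\ref{algo:candidatecycle}, when the enumerated vertex $u$ is adjacent to both endpoints of the enumerated edge, the two shortest paths have length $1$ and the cycle produced is exactly that triangle). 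Hence, after restricting each $\mathcal{M}(G_i,3)$ to the set $T^* := T(\bigcap_i G_i)$ of common triangles, the ground set is $T^*$ for every $i$.

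Next I would observe that linear independence of a family of triangles is an intrinsic property of the corresponding edge subsets and is insensitive to which ambient graph $G_i$ one views them in. Consequently $\mathcal{M}(G_1,3),\dots,\mathcal{M}(G_k,3)$ restricted to $T^*$ are one and the same matroid, and ``maximise the number of triangles'' becomes: compute a maximum-size linearly independent subset of $T^*$, i.e.\ a basis of $span(T^*)$. A greedy pass (equivalently, Gaussian elimination over $\mathbb{Z}/2\mathbb{Z}$) does this in polynomial time, because $|T^*| \le \binom{|V|}{3}$. To see that this yields a feasible solution maximising triangles, I would note that the set of triangles contained in any feasible solution is independent in each $\mathcal{M}(G_i)$, hence in each $\mathcal{M}(G_i,3)$, hence linearly independent; conversely any linearly independent $B \subseteq T^*$ is independent in each $\mathcal{M}(G_i)$ and therefore, by the very definition of $\mathcal{M}(G_i)$, extends to some $B_i \in \mcb{G_i}$, so $B$ is itself feasible. (Lemma~\ref{lem:exchangesamesizecycle} moreover guarantees that this triangle part can be substituted into any other feasible solution without changing the rest, which is what makes ``maximise triangles'' a well-posed sub-objective.)

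For the stated consequence, when $\gamma = 3$ every cycle appearing in a minimum cycle basis of any $G_i$ has weight exactly $3$ (weight $\ge 3$ by girth, $\le 3$ by hypothesis), so \emph{every} feasible solution $B$ consists solely of triangles and necessarily $B \subseteq T^*$; thus maximising $|B|$ is literally the problem just solved, proving \MMCBI (and its decision version \MCBI) polynomial for $\gamma = 3$.

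I expect the only real obstacle to be the first two steps: seeing that Lemma~\ref{lem:circuit} strips all reference to minimality and collapses $\mathcal{M}(G_i,3)$ to a plain linear matroid, and then checking that the $k$ induced matroids on the common triangles actually coincide, so that no genuine matroid-intersection machinery is needed. Everything after that — the greedy algorithm and the $\gamma = 3$ corollary — is routine.
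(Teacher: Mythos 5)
Your proposal is correct and follows essentially the same route as the paper: both apply Lemma~\ref{lem:circuit} with $l=3$, note that simple graphs have no nonzero cycles of weight at most $2$ so independence in each $\mathcal{M}(G_i,3)$ collapses to linear independence over $\mathbb{Z}/2\mathbb{Z}$, and then run a greedy/Gaussian-elimination pass over the common triangles. You simply spell out details the paper leaves implicit (that all common triangles appear in $L$, that the restricted matroids coincide, and the $\gamma=3$ corollary), all of which are accurate.
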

	\begin{proof}
		Because we work with simple graphs, there are no cycles of size 2. By Lemma~\ref{lem:circuit}, any set of linearly independent triangles of $L$ is independent in $\mathcal{M}(G_i, 3)$ for all $i$. We can then start with an empty solution $B$ and add each cycle $c \in T(G) \cap L$ to $B$ if $c \not\in span(B)$.
	\end{proof}
	
	Now we consider the case for $\gamma = 4$ and $\Delta = 3$. An instance where $\gamma = 4$ has triangles and squares in $L$. Contrary to the previous case, we can have two squares that are linearly independent but not together independent in $\mathcal{M}(G_i, 4)$ for all $i$. For instance one of the squares may contain a diagonal in $G_i$ or the sum of the squares may belong to the span of the triangles like in Figure~\ref{subfig:np-hardness:CONN:1}. 
	
	Interestingly, the algorithm we use is almost the same as for $\gamma = 3$. It is described in the proof of Theorem~\ref{theo:gamma:4:delta:3:poly}.
	 
	We consider that $L$ do not contain any square that is spanned by triangles of $T(G)$ as such a square cannot belong to any independent set of $\mathcal{M}(G_i, 4)$. Such cycles can be removed from $L$ in polynomial time.
	
	In the following lemmas, we use the matroid terminology of \emph{circuit}. By Lemma~\ref{lem:circuit}, a circuit of $\mathcal{M}(G, 4)$ is a subset $C \subseteq S(G)$ such that $\bigoplus C \in span(T(G))$ but, for all subsets $C'$ of $C$, we have $\bigoplus C' \not\in span(T(G))$. We now caracterize the circuits of the graphs when $\Delta = 3$. 
	
	\begin{lemma}
		\label{lem:gamma:4:delta:3:1:intersects}
		Let $G \in \{G_i, i \leq k\}$ with $\Delta = 3$ and $s_1 = (a_1, b_1, c_1, d_1) \in L$, $s_2 = (a_2, b_2, c_2, d_2) \in L$, with $s_1 \neq s_2$ and $t_3 = (a_3, b_3, c_3) \in T(G)$. Figure~\ref{fig:gamma:4:delta:3:1} gives the possible intersections of $s_1$ and $s_2$, and of $s_1$ and $t_3$, up to an isomorphism.
	\end{lemma}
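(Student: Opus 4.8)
The plan is a finite case analysis driven entirely by the bound $\Delta = 3$, together with the standing assumption that $L$ contains no square spanned by triangles of $T(G)$. The one structural fact I would record up front: every vertex of a square, and every vertex of a triangle, already carries two incident edges of that subgraph, so in a graph with maximum degree $3$ it has at most one further incident edge. I would then organize the argument by the number $j$ of vertices common to the two subgraphs in question ($s_1$ and $s_2$, or $s_1$ and $t_3$), and within each value of $j$ by the relative position of the shared vertices inside each square (consecutive on the $4$-cycle, or diagonal), so that the phrase ``up to an isomorphism'' is genuinely justified.

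For $j \le 2$ the degree bound does everything. If $j = 1$, the unique common vertex has two incident edges in each of the two subgraphs and these four edges are distinct (a repeated edge would create a second common vertex), so it has degree at least $4$ --- impossible. If $j = 2$ and the two common vertices $u,v$ are not joined by an edge belonging to both subgraphs, then a short count of the edges incident to $u$ again yields degree at least $4$; hence $uv$ is a common edge, which gives precisely the ``glued along one edge'' configurations of Figure~\ref{fig:gamma:4:delta:3:1} (two squares meeting on an edge; a square and a triangle meeting on an edge). I would also note that a square and a triangle cannot share more than one edge without sharing all three vertices of the triangle, a case treated below.

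It remains to handle $j = 3$ and, for two squares, $j = 4$. Three vertices of a $4$-cycle occur consecutively on it and thus form a path of length $2$ whose midpoint is the unique one of the three adjacent to the other two. In the square--square case with $j = 3$: if the midpoints in $s_1$ and in $s_2$ were different, the common vertex that is a midpoint in neither would receive four distinct incident edges, impossible; so the midpoints coincide and $s_1,s_2$ share two consecutive edges --- the remaining picture of the figure. In the square--triangle case with $j = 3$, the three common vertices are all of $t_3$, so $t_3$ uses a diagonal of $s_1$; that diagonal splits $s_1$ into two triangles of $G$ whose $\oplus$-sum is $s_1$, i.e. $s_1 \in span(T(G))$, contradicting $s_1 \in L$. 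Finally, two distinct $4$-cycles on the same four vertices have union $K_4$, whose four triangles all lie in $T(G)$ and again span $s_1$, contradiction. Collecting the surviving configurations reproduces Figure~\ref{fig:gamma:4:delta:3:1}. The mechanical degree counting is routine; the only point needing genuine care is invoking the no-triangle-spanned-square property of $L$ to eliminate the $j=3$ square--triangle and $j=4$ square--square cases, and being exhaustive over the adjacent-versus-diagonal positions of the shared vertices inside each square.
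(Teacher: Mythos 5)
Your proof is correct and follows essentially the same route as the paper's: a case analysis on the number of shared vertices, using the degree bound $\Delta = 3$ to kill the $1$- and $2$-vertex non-edge-sharing cases and the fact that squares in $L$ have no chord (are not spanned by $T(G)$) to kill the $3$-vertex square--triangle and $4$-vertex square--square cases. You spell out the midpoint and $K_4$ arguments that the paper leaves implicit, but the underlying argument is identical.
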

	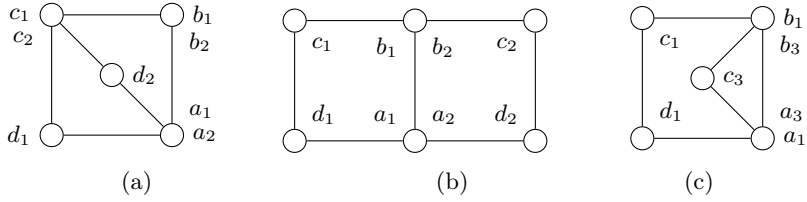
\begin{figure}[t]
		\centering
		\begin{subfigure}{0.3\linewidth}
			\begin{tikzpicture}[scale=0.8]
				\tikzset{tinoeud/.style={draw, circle, minimum height=0.01cm}}
				\tikzset{titerminal/.style={draw, rectangle, minimum height=0.01cm}}
				
				\node[tinoeud, label={left:$d_1$}] (V1) at (0,0) {};
				\node[tinoeud, label={left:$c_1$}, label={below left:$c_2$}] (V2) at (0,2) {};
				\node[tinoeud, label={right:$b_1$}, label={below right:$b_2$}] (V3) at (2,2) {};
				\node[tinoeud, label={above right:$a_1$}, label={right:$a_2$}] (V4) at (2,0) {};
				\node[tinoeud, label={right:$d_2$}] (V5) at (1,1) {};
				
				\draw (V1) -- (V2) -- (V3) -- (V4) -- (V1);
				\draw (V2) -- (V5) -- (V4);
			\end{tikzpicture}
			\caption{}
			\label{subfig:gamma:4:delta:3:1:1}
		\end{subfigure}
		\begin{subfigure}{0.37\linewidth}
			\begin{tikzpicture}[scale=0.8]
				\tikzset{tinoeud/.style={draw, circle, minimum height=0.01cm}}
				\tikzset{titerminal/.style={draw, rectangle, minimum height=0.01cm}}
				
				\node[tinoeud, label={above right:$a_2$}, label={above left:$a_1$}] (V1) at (0,0) {};
				\node[tinoeud, label={above left:$d_2$}] (V2) at (2,0) {};
				\node[tinoeud, label={below left:$c_2$}] (V3) at (2,2) {};
				\node[tinoeud, label={below right:$b_2$}, label={below left:$b_1$}] (V4) at (0,2) {};
				\node[tinoeud, label={above right:$d_1$}] (V5) at (-2,0) {};
				\node[tinoeud, label={below right:$c_1$}] (V6) at (-2,2) {};
				
				\draw (V1) -- (V2) -- (V3) -- (V4) -- (V1);
				\draw (V1) -- (V5) -- (V6) -- (V4);
			\end{tikzpicture}
			\caption{}
			\label{subfig:gamma:4:delta:3:1:2}
		\end{subfigure}
		\begin{subfigure}{0.15\linewidth}
			\begin{tikzpicture}[scale=0.8]
				\tikzset{tinoeud/.style={draw, circle, minimum height=0.01cm}}
				\tikzset{titerminal/.style={draw, rectangle, minimum height=0.01cm}}
				
				\node[tinoeud, label={above right:$d_1$}] (V1) at (0,0) {};
				\node[tinoeud, label={below right:$c_1$}] (V2) at (0,2) {};
				\node[tinoeud, label={right:$b_1$}, label={below right:$b_3$}] (V3) at (2,2) {};
				\node[tinoeud, label={right:$a_1$}, label={above right:$a_3$}] (V4) at (2,0) {};
				\node[tinoeud, label={right:$c_3$}] (V5) at (1,1) {};
				
				\draw (V1) -- (V2) -- (V3) -- (V4) -- (V1);
				\draw (V4) -- (V5) -- (V3);
			\end{tikzpicture}
			\caption{}
			\label{subfig:gamma:4:delta:3:1:3}
		\end{subfigure}
		\caption{Possible cases of intersection of squares and of triangles.}
		\label{fig:gamma:4:delta:3:1}
	\end{figure}
	\begin{proof}
		Recall that we removed from $L$ the squares generated by triangles of $T(G)$: they do not contain a chord. For the squares intersection, if $s_1$ and $s_2$ have only one common node, that node has degree 4 and 4 common nodes imply two diagonals. Thus they have three common nodes (Figure~\ref{subfig:gamma:4:delta:3:1:1}) or two common nodes without diagonal (Figure~\ref{subfig:gamma:4:delta:3:1:2}). Similarly, if $s_1$ and $t_3$ have 1 or 3 common nodes, there is a contradiction. Without diagonal, this leads to Figure~\ref{subfig:gamma:4:delta:3:1:3}.
	\end{proof}
	
	\begin{lemma}
		\label{lem:gamma:4:delta:3:2:connected}
		Let $G \in \{G_i, i \leq k\}$ with $\Delta = 3$ and $C \subseteq L$ be a circuit of $\mathcal{M}(G_i, 4)$. Then $\bigcup C$ is a connected subgraph of $G$.
	\end{lemma}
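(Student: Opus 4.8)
The plan is to argue by contradiction: suppose $\bigcup C$ is disconnected. Every element of $C$ is a square, hence a connected subgraph of $G$, so each $s \in C$ lies inside a single connected component of $\bigcup C$. I would pick one such component $X$ and split $C$ into $A = \{s \in C : s \subseteq X\}$ and $B = C \setminus A$; both parts are nonempty (there is a square in $X$, and a second component supplies a square for $B$) and $\bigcup A = X$ is vertex-disjoint from $\bigcup B$. Set $U = V(X)$ and $W = V(G) \setminus U$. Since the edge set of $\bigoplus A$ is contained in $E(\bigcup A)$, every edge of $\bigoplus A$ has both endpoints in $U$; likewise every edge of $\bigoplus B$ lies inside $W$; hence $\bigoplus C = \bigoplus A \oplus \bigoplus B$ uses no $U$--$W$ edge. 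By Lemma~\ref{lem:circuit}, being a circuit means $\bigoplus C \in span(T(G))$ while $\bigoplus C' \notin span(T(G))$ for every nonempty proper subset $C' \subsetneq C$; in particular $\bigoplus A \notin span(T(G))$, and the goal is to contradict this by writing $\bigoplus A$ as a sum of triangles.

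The next step is to fix $D \subseteq T(G)$ with $\bigoplus D = \bigoplus C$ and to sort the triangles of $D$ by the number $j \in \{0,1,2,3\}$ of their vertices lying in $U$, yielding classes $D_0, D_1, D_2, D_3$. The key observation, and the only place the hypothesis $\Delta = 3$ is used, is that $D_1 = \emptyset$: a triangle with exactly one vertex $u \in U$ would make $u$ incident to two $U$--$W$ edges, whereas every vertex of $U = V(\bigcup A)$ already lies on a square of $A$ and therefore carries two edges with both endpoints in $U$, leaving at most one further incident edge since $\Delta \le 3$. A triangle in $D_0$ or $D_3$ has no $U$--$W$ edge, while a triangle in $D_2$ has exactly one edge inside $U$, two $U$--$W$ edges, and no edge inside $W$.

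Finally I would restrict the identity $\bigoplus A \oplus \bigoplus B = \bigoplus D_0 \oplus \bigoplus D_2 \oplus \bigoplus D_3$ to subsets of the edge set. Restricted to the $U$--$W$ edges, the left-hand side vanishes and only $D_2$ can contribute on the right, so $\bigoplus D_2$ has no $U$--$W$ edge; as $\bigoplus D_2$ also has no edge inside $W$, it consists only of edges with both endpoints in $U$. Restricting the same identity to the edges with both endpoints in $U$ then gives $\bigoplus A = \bigoplus D_2 \oplus \bigoplus D_3 = \bigoplus(D_2 \cup D_3) \in span(T(G))$, the desired contradiction; hence $\bigcup C$ is connected.

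I expect the only delicate point to be this middle step — ruling out $D_1$ and observing that the $U$--$W$ edges carried by $D_2$ must cancel — which is exactly where $\Delta \le 3$ is indispensable: without that bound one could effectively triangulate the region between $\bigcup A$ and $\bigcup B$ and the statement would fail. Everything else is routine linear bookkeeping over $\mathbb{Z}/2\mathbb{Z}$ with the coordinate-restriction maps, so I would mainly make sure the degree hypothesis is invoked solely through the fact that a vertex of $\bigcup A$ has at most one incident edge leaving $\bigcup A$.
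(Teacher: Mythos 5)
Your proof is correct, and it takes a genuinely different route from the paper's. The paper first proves that $\bigcup C \cup \bigcup T$ is connected for a \emph{minimal} $T \subseteq T(G)$ with $\bigoplus T = \bigoplus C$ --- an argument close in spirit to your component splitting, but applied to the larger graph --- and must then separately exclude the possibility that two components of $\bigcup C$ are joined by a chain of triangles inside $\bigcup T$; it does this by walking along such a chain vertex by vertex and deriving a degree-$4$ contradiction through a short case analysis. You instead cut directly along a component $X$ of $\bigcup C$, classify the triangles of a generating set $D$ by the number of vertices they have in $U = V(X)$, eliminate the class $D_1$ with a single use of the degree bound (each vertex of $U$ already carries two edges inside $U$ coming from its square in $A$, so it cannot also carry two $U$--$W$ edges), and let the restriction maps over $\mathbb{Z}/2\mathbb{Z}$ finish the job: the $U$--$W$ edges of $\bigoplus D_2$ must cancel, whence $\bigoplus A = \bigoplus (D_2 \cup D_3) \in span(T(G))$ for the nonempty proper subset $A \subsetneq C$, contradicting the characterization of circuits from Lemma~\ref{lem:circuit}. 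Your version concentrates the hypothesis $\Delta = 3$ in one clean step and replaces the paper's path-chasing by linear bookkeeping, which is shorter and arguably less error-prone; the paper's version never needs to partition the triangles or restrict sums to edge classes, and as a by-product establishes connectivity of $\bigcup C \cup \bigcup T$, though that intermediate fact is not reused elsewhere. Both arguments are sound.
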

	\begin{proof}
		As $C$ is a circuit of $\mathcal{M}(G_i, 4)$, by Lemma~\ref{lem:circuit}, there exists $T \subseteq T(G)$ such that $\bigoplus T = \bigoplus C$. We assume that $T$ is minimal, meaning that for $T' \subsetneq T$, $\bigoplus T \neq 0$. We now show that $\bigcup C \cup \bigcup T$ is a connected graph. Indeed, otherwise, let assume there exists a connected component $G'$ of $\bigcup C \cup \bigcup T$, and let $C'$ and $T'$ be respectively the proper subets of $C$ and $T$ in $G'$. No cycle of $C \backslash C'$ and $T \backslash T'$ intersects the edges of $\bigcup C' \cup \bigcup T'$. As $\bigoplus T = \bigoplus C$ then $\bigoplus T' = \bigoplus C'$. Consequently either $C'$ is empty in which case $\bigoplus T' = 0$ or $C'$ is a dependent proper subset of $C$. The first case contradicts the minimality of $T$ and the second one contradicts the fact that $C$ is a circuit. The only left possibility is that $T' = C' = \emptyset$ meaning that $\bigcup C \cup \bigcup T$ is connected.
		
		Now we assume that $\bigcup C$ is disconnected. In $\bigcup C \cup \bigcup T$, any two connected components of $\bigcup C$ are connected by a chain $P$ included in $\bigcup T$. Let $s_1 = (a_1, b_1, c_1, d_1)$ and $s_2 = (a_2, b_2, c_2, d_2)$ be two squares of $C$ respectively in the first and second component linked by $P$. Let $w_1, w_2, \dots, w_q$ be the nodes of $P$, with $w_1 \in s_1$ and $w_q \in s_2$. Without loss of generality, we state that $w_1 = a_1$ and $w_q = a_2$. Note that we cannot have $q = 1$ as the squares do not intersect.
		
		Each edge $(w_i, w_{i + 1})$ belongs to a triangle $t_i$. As $w_1 = a_1$ has 3 neighbors, $b_1, d_1$ and $w_2$, then $t_1$ is either $(w_1, w_2, b_1)$ or $(w_1, w_2, d_1)$. We assume, wlog, that it is $(w_1, w_2, b_1)$. As a result, $q \neq 2$, otherwise, $w_2 = a_2$ has four neighbors ($a_1, b_1, b_2$ and $d_2$). Thus $q \geq 3$. As $w_2$ has already 3 neighbors, $w_3, a_1$ and $b_1$, the nodes of $t_2$ are $w_2, w_3$ and either $a_1$ or $b_1$. This means that $w_3$ is connected to that node, and then the degree of $a_1$ or $b_1$ is 4. We then have a contradiction.		
	\end{proof}
	
	\begin{lemma}
		\label{lem:gamma:4:delta:3:3:circuits}
		Let $G \in \{G_i, i \leq k\}$ with $\Delta = 3$. Then Figure~\ref{fig:gamma:4:delta:3:2} gives the possible circuits $C$ of $G$ such that $\bigoplus C \neq 0$.
	\end{lemma}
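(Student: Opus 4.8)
The plan is to reduce every circuit $C$ with $\bigoplus C\neq 0$ to a bounded, connected configuration and then enumerate, using only Lemma~\ref{lem:gamma:4:delta:3:2:connected}, Lemma~\ref{lem:gamma:4:delta:3:1:intersects} and the degree bound $\Delta=3$. First I would fix such a circuit $C$ and, exactly as in the proof of Lemma~\ref{lem:gamma:4:delta:3:2:connected}, a minimal $T\subseteq T(G)$ with $\bigoplus T=\bigoplus C$; minimality makes $T$ linearly independent, and that same proof shows $H:=\bigcup C\cup\bigcup T$ is connected. Write $Z:=\bigoplus C$, a non-empty cycle contained in $H$.

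Next I would record the elementary consequences of $\Delta=3$. Every edge of $H$ lies in at most two triangles (three would force a vertex of degree $\geq 4$); hence every edge of $Z$, lying in an odd number of triangles of $T$, lies in exactly one, while every other edge of $H$ lies in an even number of triangles of $T$; each edge of $Z$ also lies in an odd number of squares of $C$ and each other edge in an even number. By Lemma~\ref{lem:gamma:4:delta:3:1:intersects}, two distinct squares of $C$ meet in a single edge or in a path on three vertices, and two distinct triangles of $T$ meet in at most one edge. Finally, at a degree-$3$ vertex the three incident edges form only three ``corners'', so only three of the cycles of $C\cup T$ can turn there; together with the intersection lemma this essentially pins down the degrees at the shared vertices of any two squares.

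Then I would bound $|C|$ by the small constant visible in Figure~\ref{fig:gamma:4:delta:3:2}. The lever is that each edge $uv$ of $Z$ must be covered by a triangle $uvw$ of $T$, so $w$ is a common neighbour of $u$ and $v$; since $u$ and $v$ already carry two square-edges each, this almost exhausts their degree budget, and propagating this constraint around $Z$ forces the triangles of $T$ and leaves no room for further squares: any additional square, attached along a new edge or a new $2$-path (the only options, by Lemma~\ref{lem:gamma:4:delta:3:1:intersects}), would either raise a degree to $4$, or be spanned by $T(G)$ and hence already absent from $L$, or make a proper subset of $C$ dependent, contradicting that $C$ is a circuit. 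I expect this propagation-and-closing argument to be the main obstacle, because it has to be organised into a clean finite case list: which endpoint of the new square lies on an old vertex, and whether the new overlap is one edge or a $2$-path, each case needing the ``no degree-$4$ vertex, no proper dependent subset'' check.

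Finally, with $|C|$ bounded I would run the remaining finite case distinction. The case $|C|=1$ is impossible, since squares spanned by $T(G)$ were deleted from $L$. For the remaining values of $|C|$ the pairwise overlap patterns of the squares are by now fully constrained, and imposing that $\bigoplus C$ — but no proper sub-sum — lies in $span(T(G))$, filtered once more by $\Delta=3$, leaves exactly the configurations of Figure~\ref{fig:gamma:4:delta:3:2}, each equipped with the forced triangles witnessing $\bigoplus C\in span(T(G))$. Checking that each displayed configuration is genuinely a circuit (dependent, and minimally so) is then routine from Lemma~\ref{lem:circuit}.
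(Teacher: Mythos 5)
Your overall strategy coincides with the paper's: anchor the analysis on Lemma~\ref{lem:gamma:4:delta:3:2:connected} (connectedness of $\bigcup C\cup\bigcup T$), Lemma~\ref{lem:gamma:4:delta:3:1:intersects} (the admissible pairwise overlaps), the degree bound, and the minimality of the circuit, then enumerate. Your preliminary observations are correct and even slightly sharper than what the paper states explicitly: each edge lies in at most two triangles, hence each edge of $\bigoplus C$ lies in exactly one triangle of the minimal $T$, and the parity of square- and triangle-covers of every edge is a clean way to encode the covering constraint that drives the paper's argument.

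The proof as written nevertheless has a genuine gap: the step that constitutes essentially the entire argument --- ``propagating this constraint around $Z$ forces the triangles of $T$ and leaves no room for further squares'' --- is asserted rather than carried out, and you flag it yourself as the main obstacle. Concretely, what is missing is the enumeration the paper performs: fix a square $s_1\in C$ meeting a triangle $t\in T$ (necessarily in the single-edge pattern of Figure~\ref{subfig:gamma:4:delta:3:1:3}), take a second square $s_2$ meeting $s_1$ (it exists by connectedness since $|C|>1$), and work through the five essentially distinct positions of the overlap relative to $t$ allowed by Lemma~\ref{lem:gamma:4:delta:3:1:intersects} (two with three common vertices, three with a shared edge), each needing its own degree-$3$ contradiction or construction. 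Moreover, your trichotomy for ``any additional square'' (degree $4$, spanned by $T(G)$, or dependent proper subset) treats extra squares only as things to be excluded, whereas in two of the sub-cases they are \emph{forced}: an edge of $\bigoplus T\oplus\bigoplus C$ left uncovered, such as $(u,c_2)$ when $s_1\cap s_2=\{a_1,d_1\}$, must lie in a further square of $C$, which is exactly what produces the three-square circuit of Figure~\ref{subfig:gamma:4:delta:3:2:2} in one branch and a contradiction via a fourth square in another. Without this explicit case analysis the conclusion that the only configurations are those of Figure~\ref{fig:gamma:4:delta:3:2} is not established.
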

	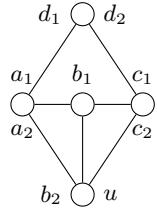
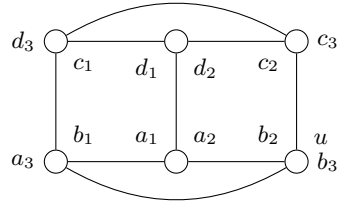
\begin{figure}[t]
		\centering
		\begin{subfigure}{0.45\linewidth}
			\centering
			\begin{tikzpicture}[scale=0.8]
				\tikzset{tinoeud/.style={draw, circle, minimum height=0.01cm}}
				\tikzset{titerminal/.style={draw, rectangle, minimum height=0.01cm}}
				
				\node[tinoeud, label={above:$a_1$}, label={below:$a_2$}] (V1) at (0,0) {};
				\node[tinoeud, label={left:$d_1$}, label={right:$d_2$}] (V2) at (1,1.5) {};
				\node[tinoeud, label={above:$c_1$}, label={below:$c_2$}] (V3) at (2,0) {};
				\node[tinoeud, label={above:$b_1$}] (V4) at (1,0) {};
				\node[tinoeud, label={left:$b_2$}, label={right:$u$}] (V5) at (1,-1.5) {};
				
				\draw (V1) -- (V2) -- (V3) -- (V4) -- (V1);
				\draw (V3) -- (V5) -- (V1);
				\draw (V4) -- (V5);
			\end{tikzpicture}
			\caption{Circuit with two squares.}
			\label{subfig:gamma:4:delta:3:2:1}
		\end{subfigure}
		\begin{subfigure}{0.45\linewidth}
			\centering
			\begin{tikzpicture}[scale=0.8]
				\tikzset{tinoeud/.style={draw, circle, minimum height=0.01cm}}
				\tikzset{titerminal/.style={draw, rectangle, minimum height=0.01cm}}
				
				\node[tinoeud, label={above left:$a_1$}, label={above right:$a_2$}] (V1) at (0,0) {};
				\node[tinoeud, label={right:$b_3$}, label={above left:$b_2$}, label={above right:$u$}] (V2) at (2,0) {};
				\node[tinoeud, label={right:$c_3$}, label={below left:$c_2$}] (V3) at (2,2) {};
				\node[tinoeud, label={below left:$d_1$}, label={below right:$d_2$}] (V4) at (0,2) {};
				\node[tinoeud, label={left:$a_3$}, label={above right:$b_1$}] (V5) at (-2,0) {};
				\node[tinoeud, label={left:$d_3$}, label={below right:$c_1$}] (V6) at (-2,2) {};
				
				\draw (V1) -- (V2) -- (V3) -- (V4) -- (V1);
				\draw (V1) -- (V5) -- (V6) -- (V4);
				\draw (V2) to[bend left] (V5);
				\draw (V3) to[bend right] (V6);
			\end{tikzpicture}
			\caption{Circuit with three squares.}
			\label{subfig:gamma:4:delta:3:2:2}
		\end{subfigure}
		\caption{Possible circuits containing squares in a graph $G$ with degree at most 3. Each square contains four node $(a_i, b_i, c_i, d_i)$.}
		\label{fig:gamma:4:delta:3:2}
	\end{figure}
	\begin{proof}
		As $\bigoplus C \neq 0$, then, there exists $T \subseteq T(G)$ with $T \neq \emptyset$ and $\bigoplus C = \bigoplus T$. There exists at least one triangle $t$ that intersects a square $s_1$ of $C$. By Lemma~\ref{lem:gamma:4:delta:3:1:intersects}, that intersection is the graph depicted by Figure~\ref{subfig:gamma:4:delta:3:1:3}. We set $(a_1, b_1, c_1, d_1)$ and $t = (a_1, b_1, u)$. As we removed from $L$ the squares generated by triangles, $|C| > 1$. By Lemma~\ref{lem:gamma:4:delta:3:2:connected}, there exists another square $s_2$ that intersects $s_1$. Let $s_2 = (a_2, b_2, c_2, d_2)$. In the following, we rename the nodes of $s_2$ so that if $a_1$ (resp. $b_1, c_1, d_1$) is in $s_1 \cap s_2$, then $a_1 = a_2$ (resp. $b_1 = b_2$, $c_1 = c_2$, $d_1 = d_2$). 
		\begin{itemize}
			\item If $s_1$ and $s_2$ have three common nodes as in Figure~\ref{subfig:gamma:4:delta:3:1:1}, then:
			\begin{itemize}
				\item $s_1 \cap s_2 = \{a_1 = a_2, b_1 = b_2, c_1 = c_2\}$ and $d_1 \neq d_2$ (see Figure~\ref{subfig:gamma:4:delta:3:3:1}). Then $u$, $b_1$, $d_1$ and $d_2$ are neighbors of $a_1$, then two of those nodes are equal. Otherwise the degree of $a_1$ is 4. By hypothesis, $d_1 \neq d_2$. We cannot have $d_1 = b_1$ as $s_1$ would not be a square, similarly $b_1 \neq u$. And $d_1 \neq u$ otherwise $s_1$ would contain a diagonal. The only possibilities are $d_2 = u$ or $d_2 = b_1$. In the first case $s_2$ contains a diagonal $(b_2, d_2)$. In the second case, $d_2 = b_2$. Thus there is a contradiction.
				
				\begin{figure}[t]
					\centering
					\begin{subfigure}[c]{0.32\textwidth}
						\centering
						\begin{tikzpicture}[scale=0.8]
							\tikzset{tinoeud/.style={draw, circle, minimum height=0.01cm}}
							\tikzset{titerminal/.style={draw, rectangle, minimum height=0.01cm}}
							
							\node[tinoeud, label={left:$d_1$}] (V1) at (0,0) {};
							\node[tinoeud, label={left:$c_1 c_2$}, %label={left:$c_2$}
							] (V2) at (0,2) {};
							\node[tinoeud, label={right:$b_1 b_2$}, %label={right:$b_2$}
							] (V3) at (2,2) {};
							\node[tinoeud, label={right:$a_1 a_2$}, %label={below right:$a_2$}
							] (V4) at (2,0) {};
							\node[tinoeud, label={left:$u$}] (V5) at (3,1) {};
							\node[tinoeud, dotted, label={left:$d_2$}] (V6) at (1,1) {};
							
							\draw (V1) -- (V2) -- (V3) -- (V4) -- (V1);
							\draw (V4) -- (V5) -- (V3);
							\draw[dashed] (V4) -- (V6) -- (V2);
						\end{tikzpicture}
						\caption{}
						\label{subfig:gamma:4:delta:3:3:1}
					\end{subfigure}
					\begin{subfigure}[c]{0.32\textwidth}
						\centering
						\begin{tikzpicture}[scale=0.8]
							\tikzset{tinoeud/.style={draw, circle, minimum height=0.01cm}}
							\tikzset{titerminal/.style={draw, rectangle, minimum height=0.01cm}}
							
							\node[tinoeud, label={left:$d_1 d_2$}, %label={below left:$d_2$}
							] (V1) at (0,0) {};
							\node[tinoeud, label={left:$c_1$}] (V2) at (0,2) {};
							\node[tinoeud, label={right:$b_1$}] (V3) at (2,2) {};
							\node[tinoeud,label={right:$a_1 a_2$}, %label={below right:$a_2$}
							] (V4) at (2,0) {};
							\node[tinoeud,label={left:$u$}] (V5) at (3,1) {};
							\node[tinoeud, dotted, label={above:$c_2$}] (V6) at (0.5,1) {};
							\node[tinoeud, dotted, label={above:$b_2$}] (V7) at (1.5,1) {};
							
							\draw (V1) -- (V2) -- (V3) -- (V4) -- (V1);
							\draw (V4) -- (V5) -- (V3);
							\draw[dashed] (V4) -- (V7) -- (V6) -- (V1);
						\end{tikzpicture}
						\caption{}
						\label{subfig:gamma:4:delta:3:3:2}
					\end{subfigure}
					\begin{subfigure}[c]{0.32\textwidth}
						\centering
						\begin{tikzpicture}[scale=0.8]
							\tikzset{tinoeud/.style={draw, circle, minimum height=0.01cm}}
							\tikzset{titerminal/.style={draw, rectangle, minimum height=0.01cm}}
							
							\node[tinoeud, label={left:$d_1 d_2$}] (V1) at (0,0) {};
							\node[tinoeud, label={left:$c_1 c_2$}] (V2) at (0,2) {};
							\node[tinoeud, label={right:$b_1$}] (V3) at (2,2) {};
							\node[tinoeud, label={right:$a_1$}] (V4) at (2,0) {};
							\node[tinoeud, label={left:$u$}] (V5) at (3,1) {};
							\node[tinoeud, dotted, label={right:$a_2$}] (V6) at (1,0.5) {};
							\node[tinoeud, dotted, label={right:$b_2$}] (V7) at (1,1.5) {};
							
							\draw (V1) -- (V2) -- (V3) -- (V4) -- (V1);
							\draw (V4) -- (V5) -- (V3);
							\draw[dashed] (V1) -- (V6) -- (V7) -- (V2);
						\end{tikzpicture}
						\caption{}
						\label{subfig:gamma:4:delta:3:3:3}
					\end{subfigure}
					\caption{Illustrations for the proof of Lemma~\ref{lem:gamma:4:delta:3:3:circuits}}
					\label{fig:gamma:4:delta:3:3}
				\end{figure}
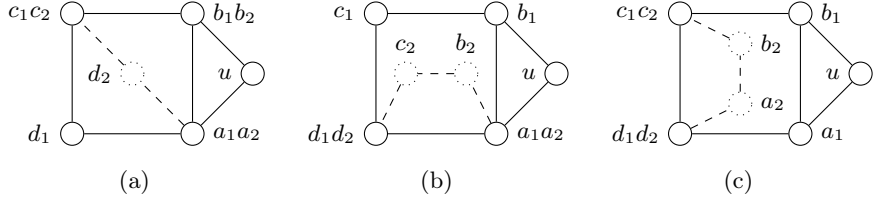
				
				\item Or $s_1 \cap s_2 = \{a_1 = a_2, c_1 = c_2, d_1 = d_2\}$ and $b_1 \neq b_2$. Similarly we can deduce that $b_2 = u$. We obtain the graph of Figure~\ref{subfig:gamma:4:delta:3:2:1}.
				\item The two other intersections are symmetrical cases.
			\end{itemize}
			\item If $s_1$ and $s_2$ have two common nodes as in Figure~\ref{subfig:gamma:4:delta:3:1:2}, then:
			\begin{itemize}
				\item $s_1 \cap s_2 = \{a_1 = a_2, b_1 = b_2\}$ and $c_1 \neq c_2, d_1 \neq d_2$. Then $c_2 = u$ and $d_2 = u$, otherwise the degree of $a_1$ or $b_1$ is 4. But $c_2$ cannot equal $d_2$.
				\item Or $s_1 \cap s_2 = \{a_1 = a_2, d_1 = d_2\}$ and $b_1 \neq b_2, c_1 \neq c_2$ (see Figure~\ref{subfig:gamma:4:delta:3:3:2}). Then $b_2 = u$ otherwise the degree of $a_1$ is 4, and $c_2 \neq u$.
				
				The edge $(b_2 = u, c_2)$ necessarily belongs either to a triangle or to another square of $C$, otherwise, we cannot have $\bigoplus T + \bigoplus C = 0$. Assuming it belongs to a triangle $t'$. As $u$ has 3 neighbors, $a_1, b_1$ and $c_2$, the third node of $t'$ is necessarily $a_1$ or $b_1$. This implies that $a_1$ or $b_1$ are linked to $c_2$. This is a contradiction as that nodes has then 4 neighbors.
				
				Consequently, the edge $(b_2 = u, c_2)$ belongs to a third square $s_3 = (a_3, b_3 = b_2 = u, c_3 = c_2, d_3)$. Again, as $u$ has already 3 neighbors, we have that $a_3 \in \{a_1 = a_2, b_1\}$. If $a_3 = a_1 = a_2$ then $d_3$ must be an existing neighbor of $a_1$, that is $b_1$ or $d_1 = d_2$. It cannot be $d_2$ otherwise $s_2 = s_3$. Then $b_1 = d_2$, $b_1$ has four neighbors and there is a contradiction. If, on the other hand, $a_3 = b_1$ then $d_3$ must be an existing neighbor of $b_1$, that is $a_1$ or $c_1$. In the first case, $a_1$ has four neighbors. In the second case, we obtain the graph of Figure~\ref{subfig:gamma:4:delta:3:2:2}.
				
				\item The case $s_1 \cap s_2 = \{b_1, c_1\}$ is symmetrical. 
				\item Or $s_1 \cap s_2 = \{c_1 = c_2, d_1 = d_2\}$ and $a_1 \neq a_2, b_1 \neq b_2$ (see Figure~\ref{subfig:gamma:4:delta:3:3:3}). The node $a_2$ cannot equal $a_1$ or $b_1$. It is then either $u$ or an additional node. The same property occurs for $b_2$. If $a_2 = u$ then $u$ has four neighbors $a_1, b_1$, $d_1 = d_2$ and $b_2$. As $b_2$ cannot equal any of the three first nodes, there is a contradiction. Similarly, there is a contradiction if $b_2 = u$. 
				
				The edge $(a_1, d_1)$ necessarily belongs to either a triangle or another square of $C$, otherwise, we cannot have $\bigoplus T + \bigoplus C = 0$. It cannot belong to a triangle as the two nodes have three neighbors and no common neighbor. There is then a third square $s_3 = (a_3 = a_1, b_3, c_3, d_3 = d_1)$ containing that edge. Then $b_3$ is a neighbor of $a_1$ and $c_3$ is a neighbor of $d_3$. As a consequence $b_3 = u$ and $c_3 = a_2$. Thus $u$ and $a_2$ are neighbors.
				
				Similarly, the edge $(b_1, c_1)$ belongs to a fourth square and $u$ and $b_2$ are neighbors, and then $u$ has four neighbors. This is a contradiction.
			\end{itemize}
		\end{itemize}
		
		Note finally that no square can be added to extend the circuits of Figure~\ref{fig:gamma:4:delta:3:2} as, by definition, $C$ is a minimal dependent set of cycles. 		
	\end{proof}
	
	\begin{theorem}
		\label{theo:gamma:4:delta:3:poly}
		\MCBI and \MMCBI are polynomial when $\gamma = 4$ and $\Delta = 3$.
	\end{theorem}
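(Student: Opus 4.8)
The plan is to follow the same scheme as for $\gamma = 3$: handle triangles and squares separately and glue the two partial solutions together. By Lemma~\ref{lem:exchangesamesizecycle}, from a feasible solution $B_3$ maximizing the number of triangles and a feasible solution $B_4$ maximizing the number of squares one obtains a feasible solution $B$ with $|\{c\in B:\omega(c)=3\}| = |\{c\in B_3:\omega(c)=3\}|$ and $|\{c\in B:\omega(c)=4\}| = |\{c\in B_4:\omega(c)=4\}|$; since for $\gamma=4$ every feasible solution consists only of triangles and squares, and since restricting a feasible solution to its triangles (resp. squares) stays feasible, $B$ attains the upper bound $(\text{max triangles})+(\text{max squares})$ and is optimal. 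Maximizing the number of triangles is done exactly as in Theorem~\ref{theo:gamma:3:poly}, so the whole difficulty is to compute in polynomial time a maximum set of squares that is independent in $\mathcal{M}(G_i,4)$ for every $i$.

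First I would make the circuit structure explicit. By Lemma~\ref{lem:circuit}, a set $B$ of squares of $L$ is independent in $\mathcal{M}(G_i,4)$ iff no non-empty $D\subseteq B$ has $\bigoplus D\in span(T(G_i))$, equivalently iff $B$ is linearly independent and $span(B)\cap span(T(G_i))=\{0\}$. Hence the circuits of $\mathcal{M}(G_i,4)$ split into the \emph{linear} circuits (minimal linearly dependent sets of squares, $\bigoplus C=0$, possibly large but independent of $i$) and the circuits with $\bigoplus C\neq0$, which by Lemma~\ref{lem:gamma:4:delta:3:3:circuits} are, up to isomorphism, the two configurations of Figure~\ref{fig:gamma:4:delta:3:2} and so contain at most three squares. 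The algorithm is then: using Theorem~\ref{theo:candidateset} restrict to $L$ and discard every square spanned by $T(G_i)$ for some $i$; for every $i$ enumerate (in polynomial time, $|L|$ being polynomial, each test being a Gaussian elimination over $\mathbb{Z}/2\mathbb{Z}$) all pairs and triples of squares of $L$ that form a circuit of $\mathcal{M}(G_i,4)$ with $\bigoplus\neq0$, collecting them in a family $\mathcal{F}$; then start with $B=\emptyset$ and, scanning the squares of $S(G)\cap L$ in arbitrary order, add the current square $c$ to $B$ whenever $c\notin span(B)$ and $B\cup\{c\}$ contains no member of $\mathcal{F}$.

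Feasibility of the output is immediate: $B$ is linearly independent, so it contains no linear circuit, and it contains no configuration of $\mathcal{F}$, so by Lemma~\ref{lem:gamma:4:delta:3:3:circuits} it contains no circuit with $\bigoplus\neq0$ of any $\mathcal{M}(G_i,4)$; hence $B$ is independent in every $\mathcal{M}(G_i,4)$ and, with the triangle part and Lemma~\ref{lem:mg:independenceoracle}, gives a feasible solution. The optimality of $B$ is the main obstacle. Since the greedy scan always outputs a set that is maximal among the feasible square-sets, it suffices to prove that all maximal feasible square-sets have the same cardinality — which is not automatic, because an intersection of matroids need not be a matroid. Here I would exploit the rigidity forced by $\Delta=3$: Lemmas~\ref{lem:gamma:4:delta:3:1:intersects}, \ref{lem:gamma:4:delta:3:2:connected} and \ref{lem:gamma:4:delta:3:3:circuits} show that every non-linear circuit lives on a bounded, highly constrained subgraph in which each vertex already has degree $3$, which tightly limits how a given square can lie in circuits of distinct $G_i$'s. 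Using this I would establish an exchange property: given feasible square-sets $B$ and $B^{*}$ with $|B^{*}|>|B|$, linear-matroid augmentation yields $s^{*}\in B^{*}\setminus B$ with $B\cup\{s^{*}\}$ linearly independent, and the local structure of the circuits through $s^{*}$ then lets one either keep $s^{*}$ or trade a bounded, explicitly identifiable set of squares of $B$ for squares of $B^{*}$ while strictly increasing the size; iterating contradicts the maximality of $B$ unless $|B|\ge|B^{*}|$. This makes the greedy output maximum, and as every step is polynomial the theorem follows for \MMCBI, and hence for \MCBI.
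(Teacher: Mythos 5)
Your overall framework coincides with the paper's: split the solution by cycle size via Lemma~\ref{lem:exchangesamesizecycle}, treat the triangles exactly as in Theorem~\ref{theo:gamma:3:poly}, and run a greedy on the squares of $L$ that rejects a square exactly when it closes either a linear circuit or one of the bounded non-linear circuits characterized by Lemma~\ref{lem:gamma:4:delta:3:3:circuits}; your acceptance test is equivalent to the paper's test $c \notin \bigcup_i span(B \cup T(G_i))$, and your feasibility argument is correct. The gap is that you have correctly isolated the crux --- why a greedily built, maximal feasible square-set is \emph{maximum}, given that an intersection of matroids has no augmentation property in general --- and then left it unproved. The sentence ``the local structure of the circuits through $s^{*}$ then lets one either keep $s^{*}$ or trade a bounded, explicitly identifiable set of squares of $B$ for squares of $B^{*}$ while strictly increasing the size; iterating contradicts the maximality of $B$'' is not an argument: exhibiting \emph{some} feasible set larger than $B$ contradicts nothing (we already had $B^{*}$), since maximality of $B$ only forbids adding a single square to $B$; and you never exhibit the exchange nor verify that it preserves independence in all $k$ matroids simultaneously.

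That verification is where essentially all of the paper's proof lies. The paper compares the greedy output $B$ with an optimal $B^{*}$ chosen to maximize the first index $\alpha$ of disagreement, and analyzes the circuit $C \subseteq B^{*} \cup \{s_{(\alpha)}\}$. When $C$ is dependent in every graph, circuit elimination shows $B^{*} \setminus \{s\} \cup \{s_{(\alpha)}\}$ is feasible for every $s \in C \cap B^{*}$, and one such swap increases $\alpha$. The delicate case, which your sketch does not anticipate, is when $C$ is a circuit only in some $G_{i_C}$: removing one square of $C$ says nothing about the graphs in which $C$ was already independent, so one must rule out that $s_{(\alpha)}$ lies in a different circuit $C'$ of some other $\mathcal{M}(G_i, 4)$ inside $B^{*} \cup \{s_{(\alpha)}\}$. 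The paper handles this by showing that $C$ must then be the two-square circuit of Figure~\ref{subfig:gamma:4:delta:3:2:1} and, through a four-case degree analysis exploiting $\Delta = 3$ and the absence of the edge $(b_1, b_2)$, that no such $C'$ can exist. Without this step, or an equivalent exchange lemma for the intersection of the matroids $\mathcal{M}(G_i,4)$, your proof does not go through.
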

	\begin{proof}
		We use the following algorithm. First, we compute the squares of the list $L$ with Algorithm~\ref{algo:candidatecycle}. We then remove from $L$ any square that is generated by $T(G_i)$ for some $i \in \llbracket 1; k \rrbracket$. Then, we initialize an empty solution $B$ and, for each cycle $c \in L$, add $c$ to $B$ if $c \not\in \bigcup_{i = 1}^k span(B \cup T(G_i))$. We finally return $B$. 
		
		Let $B^*$ be an optimal solution and $B$ be the solution resulting from the algorithm. Let $s_{(j)}$ be the $j$-th added cycle of $L$ and let also $B_{(j)}$ be the set $B$ at the beginning of the j-th iteration of the algorithm (before adding $s_{(j)}$). Let $\alpha$ be the first index where $s_{(\alpha)} \in B^* \backslash B$ or $s_{(\alpha)} \in B \backslash B^*$. We assume that, among all the optimal solutions, $B^*$ is the solution maximizing the index $\alpha$.  
		
		If $s_{(\alpha)} \in B^* \backslash B$, then at the $\alpha$-th iteration of the algorithm, $s_{(\alpha)}$ is not added to $B$, meaning that $s_{(\alpha)} \in span(B \cup T(G_i))$ for some $i$, that is $\{s_{(i)}\} \cup B_{(i)}$ is not independent in $\mathcal{M}(G_i, 4)$. Note that, by definition of $\alpha$,  $B_{(\alpha)} \subseteq B^*$. And because $s_{(\alpha)} \in B^*$ then $\{s_{(\alpha)}\} \cup B_{(\alpha)} \subseteq B^*$, which is a contradiction.

		If, on the other hand, $s_{(\alpha)} \in B \backslash B^*$, then $B^* \cup \{s_{(\alpha)}\}$ contains a circuit $C$ in $\mathcal{M}(G_{i_C}, 4)$, with $s_{(\alpha)} \in C$, for some $i_C \in \llbracket 1 ; k \rrbracket$. We first make the assumption that $C$ is dependent in all the graphs. Let $s \in C \cap B^*$.
		\begin{itemize}
			\item Either the set $B(s) = B^* \backslash \{s\} \cup \{s_{(\alpha)}\}$ is feasible.
			\item Or for some $i \in \llbracket 1 ; k \rrbracket$, $B^* \backslash \{s\} \cup \{s_{(\alpha)}\}$ is dependent in $\mathcal{M}(G_i, 4)$. In that last case, there exists a circuit $C'$ of $\mathcal{M}(G_i, 4)$ with $C' \subset B(s)$ and $s_{(\alpha)} \in C'$. As $s \in C$ and $s \not\in C'$ then $C \neq C'$ ; and thus $(C \cup C') \backslash (C \cap C')$ is dependent in $\mathcal{M}(G_i, 4)$. However $(C \cup C') \backslash (C \cap C') \subset B^*$ and this is a contradiction. 
		\end{itemize}
		
		Therefore $B(s)$ is feasible for every $s \in C \cap B^*$. Note that $B(s)$ is also optimal. Let $j_1 < j_2 < \cdots < j_{|C|}$ be the $|C|$ indices of the squares of $C$ in $L$ (one of them is $\alpha$). By definition of $\alpha$, for every $j \leq \alpha$, $s_{(j)} \in B$. As $B$ is independent, $C  \not\subseteq B$: there exists a square $s_{(j)}$, with $j \in \llbracket \alpha + 1; j_{|C|} \rrbracket$ such that $s_{(j)} \not\in B$. The set $B(s_{(j)})$ is optimal and then contradicts the maximization of $\alpha$ by $B^*$.

		Consequently, $C$ is not dependent in all the graphs. Let $T$ be the set of triangles of $G_{i_C}$ such that $\bigoplus C = \bigoplus T$. There exists a graph $G_i$ such that $\bigcup T  \not\subseteq G_i$. Indeed, if we assume the contrary, we get that $C$ is dependent in all graphs. As a consequence, first $T \neq \emptyset$ and at least one edge in $\bigcup T$ is not part of any square in $C$. This implies that $C$ is not the circuit depicted by Figure~\ref{subfig:gamma:4:delta:3:2:2}.
		
		By Lemma~\ref{lem:gamma:4:delta:3:3:circuits}, $C$ is the circuit depicted by Figure~\ref{subfig:gamma:4:delta:3:2:1}. It contains $s_{(\alpha)}$ and another square $s_\epsilon$. Let $s_{(\alpha)} = (a_1, b_1, c_1, d_1)$ and $s_\epsilon = (a_2 = a_1, b_2 \neq b_1, c_2 = c_1, d_2 = d_1)$. We now prove that, in each graph $G \in \{G_1, G_2, \dots, G_k\}$, if $C$ is not a circuit in $\mathcal{M}(G, 4)$ then = $B^* \cup \{s_{(\alpha)}\}$ is independent in that matroid. Indeed, we have that $(b_1, b_2) \not\in G$. Also, if $B^* \cup \{s_{(\alpha)}\}$ is not independent, then there is another circuit $C' \subseteq B^* \cup \{s_{(\alpha)}\}$ in $\mathcal{M}(G, 4)$ containing $s_{(\alpha)}$. This circuit $C'$ is again depicted by Figure~\ref{subfig:gamma:4:delta:3:2:1}, similar to $C$. Let $s_{(\lambda)} = (a_3, b_3, c_3, d_3)$ be the second square of $C'$. The four possible cases are considered.

		\begin{itemize}
			\item $b_1 = b_3, c_1 = c_3, d_1 = d_3$ and $(a_1, a_3) \in G$. Note that $a_3 \neq b_2$ as the edge $(b_1 = b_3, b_2)$ is not in $G$. However in that case $a_1$ has four neighbors.
			\item The case $a_1 = a_3, b_1 = b_3, d_1 = d_3$ and $(c_1, c_3) \in G$ is symmetrical.
			\item $a_1 = a_3, c_1 = c_3, d_1 = d_3$ and $(b_1, b_3) \in G$. Note that $b_3 \neq b_2$ as $(b_1, b_2) \not\in G$. However in that case $a_1$ has four neighbors.
			\item $a_1 = a_3, b_1 = b_3, c_1 = c_3$ and $(d_1, d_3) \in G$. If $d_3 = b_2$, then $(d_3 = b_2, d_1 = d_2)$ is a diagonal of $s_{(\varepsilon)}$. However, if $d_3 \neq b_2$ then $a_1$ has four neighbors.
		\end{itemize}
		The existence of $C'$ is a contradiction, $B^* \cup \{s_{(\alpha)}\}$ is independent in $\mathcal{M}(G, 4)$.
		
		As a consequence, $B^* \backslash \{s_{(\varepsilon)}\} \cup \{s_{(\alpha)}\}$ is feasible and optimal. However, as $B$ is independent and contains $s_{(\alpha)}$, then $s_{(\varepsilon)} \not\in B$, this means that $\varepsilon > \alpha$. The optimality of $B^* \backslash \{s_{(\varepsilon)}\} \cup \{s_{(\alpha)}\}$ contradicts the fact that $B^*$ maximises the value of $\alpha$. As a conclusion no such square $s_{(\alpha)}$ exists and $B$ is optimal. 		
	\end{proof}
	
	%	\begin{algorithm}[ht!]
		%		\label{algo:gamma:4:delta:3}		\caption{Algorithm for the case $\gamma = 4$, $\Delta = 3$.}
		%		\begin{algorithmic}[1]
			%			\Require An instance $(G_1, G_2, \dots, G_k)$ of \MMCBI with $\gamma = 4$ and $\Delta = 3$.
			%			\Ensure An optimal solution of the instance
			%			\State $L \leftarrow$ the squares returned by Candidate Set. 
			%			\State Remove from $L$ the squares generated by triangles.
			%			\State $B \leftarrow \emptyset$
			%		\end{algorithmic}
		%	\end{algorithm}
	
	\section{Concluding remarks}
	\label{sec:conclu}
	
	This paper introduces the problem of maximizing the intersection of minimum cycle bases of graphs and studies the complexity based on four natural parameters. Several questions remain open regarding the minimum value of $k$ for which inapproximability and W[1]-hardness hold. Furthermore, from a chemical perspective, one could argue that the difference between the sets of edges of the graphs (the edit distance) may be small relative to $k$, $\Delta$, and $\gamma$. This observation may lead to the discovery of new tractable algorithms for the problem.
	
	%\newpage
	\bibliographystyle{splncs04}
	\bibliography{biblio}

	\appendix
	
	\section{Proofs of Lemmas~\ref{lem:exchangecyclesinbasis}, \ref{lem:generatecyclebylowerweight}, \ref{lem:basisexchangeaxiom} and \ref{lem:circuit}}
	
	\label{app:proofs}
	
	\exchangecyclesinbasis*
	\begin{proof}
		Let $B' = (B \backslash \{c_1\}) \cup c_2$. Let $d$ be any cycle of $G$. As $\lambda_B(c_1, c_2) = 1$, then there exists $D \subset B \backslash \{c_1\}$ such that $c_2 = c_1 \oplus \bigoplus D$. Thus, $c_1 = c_2 \oplus \bigoplus D$. As a consequence, $c_1$ is generated by $B'$. 
		
		Then any cycle that is generated by $B$ is also generated by $B'$. As $|B'| = |B|$, it is a cycle basis. 		
	\end{proof}
	
	We prove two intermediates lemma to prove Lemmas~\ref{lem:generatecyclebylowerweight} and \ref{lem:basisexchangeaxiom}.
	
	\begin{lemma}
		\label{lem:generatecyclebylowerweight:1}
		If $B \in \mcb{G}$ then for $c_1, c_2$ with $c_1 \in B$ and $c_2 \not\in B$ such that $\lambda_B(c_1, c_2) = 1$, we have $\omega(c_1) \leq \omega(c_2)$. 
	\end{lemma}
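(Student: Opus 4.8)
The plan is to argue by contradiction using the exchange lemma already established. Suppose $B \in \mcb{G}$ but there exist $c_1 \in B$, $c_2 \notin B$ with $\lambda_B(c_1, c_2) = 1$ and $\omega(c_1) > \omega(c_2)$. The idea is that $c_1$ is then a needlessly heavy cycle that can be swapped out for the lighter $c_2$ without leaving the space of cycle bases, contradicting minimality of the total weight.

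Concretely, first I would invoke Lemma~\ref{lem:exchangecyclesinbasis}: since $\lambda_B(c_1,c_2)=1$, the set $B' = (B \setminus \{c_1\}) \cup \{c_2\}$ is again a cycle basis of $G$. Next I would do the weight bookkeeping. Because $c_1 \in B$ and $c_2 \notin B$, we have $|B'| = |B|$ and the sums differ in exactly one term, so
\[
\sum_{c \in B'} \omega(c) \;=\; \sum_{c \in B} \omega(c) \;-\; \omega(c_1) \;+\; \omega(c_2) \;<\; \sum_{c \in B} \omega(c),
\]
using the assumption $\omega(c_2) < \omega(c_1)$. This exhibits a cycle basis of strictly smaller weight than $B$, contradicting $B \in \mcb{G}$. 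Hence $\omega(c_1) \le \omega(c_2)$.

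I do not expect any real obstacle here: the only points that need care are that $c_2 \notin B$ (so the cardinality is preserved and the weight difference is exactly $\omega(c_2)-\omega(c_1)$, not something with extra terms) and that Lemma~\ref{lem:exchangecyclesinbasis} genuinely yields a \emph{basis} and not merely a spanning or independent set — both are handed to us by the hypothesis and the cited lemma respectively. So the whole argument is a two-line contradiction; the substance was already absorbed into Lemma~\ref{lem:exchangecyclesinbasis}.
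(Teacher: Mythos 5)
Your argument is exactly the paper's proof: apply Lemma~\ref{lem:exchangecyclesinbasis} to swap $c_1$ for $c_2$, observe the resulting basis would have strictly smaller weight if $\omega(c_2) < \omega(c_1)$, and contradict the minimality of $B$. The proposal is correct and takes essentially the same approach.
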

	\begin{proof}
		Let $B \in \mcb{G}$. Then by definition $B$ is a cycle basis. Let $c_1 \in B$ and $c_2 \not\in B$ with $\lambda_{B}(c_1, c_2) = 1$. By Lemma~\ref{lem:exchangecyclesinbasis}, $(B \backslash \{c_1\}) \cup \{c_2\}$ is a cycle basis. Then, if $\omega(c_2) < \omega(c_1)$, then the weight of $B'$ is lower than the one of $B$, this contradict the optimality of $B$. 
	\end{proof}
	
	\begin{lemma}
		\label{lem:generatecyclebylowerweight:2}
		Let $B_1, B_2$ be two cycle bases of $G$ such that
		\begin{itemize}
			\item for each $B \in \{B_1, B_2\}$, for every $c_1, c_2$ where $c_1 \in B$ and $c_2 \not\in B$ such that $\lambda_B(c_1, c_2) = 1$, we have $\omega(c_1) \leq \omega(c_2)$
		\end{itemize}
		then 
		\begin{itemize}
			\item for every $c_2 \in B_2 \backslash B_1$, there exists $c_1 \in B_1 \backslash B_2$ such that $(B_2 \backslash \{c_2\}) \cup \{c_1\}$ is a cycle basis with same weight as $B_2$.
		\end{itemize}
	\end{lemma}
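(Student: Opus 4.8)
The plan is to fix an arbitrary $c_2 \in B_2 \setminus B_1$ and extract the required $c_1$ explicitly from the representation of $c_2$ over $B_1$. Since $B_1$ is a cycle basis, there is a unique $D \subseteq B_1$ with $\bigoplus D = c_2$, namely $D = \{c \in B_1 : \lambda_{B_1}(c, c_2) = 1\}$. Because $c_2 \neq 0$ and $c_2 \notin B_1$, the set $D$ has at least two elements and $c_2 \notin D$. I would then argue that the useful candidates all lie in $D' := D \setminus B_2$: if we had $D \subseteq B_2$, then $D \cup \{c_2\}$ would be a linearly dependent subset of the basis $B_2$ (since $\bigoplus(D \cup \{c_2\}) = 0$ and $c_2 \notin D$), a contradiction; hence $D' \neq \emptyset$, and every element of $D'$ lies in $B_1 \setminus B_2$.

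Next I would pin down which element of $D'$ to swap in, via a parity argument over $\mathbb{Z}/2\mathbb{Z}$. Expanding each $c \in D$ in the basis $B_2$ and summing, the coefficient of a cycle $e \in B_2$ in $\bigoplus D = c_2$ equals $\sum_{c \in D} \lambda_{B_2}(e,c) \pmod 2$, and this must match the $B_2$-expansion of $c_2$, which is just $\{c_2\}$. Taking $e = c_2$ gives $\sum_{c \in D} \lambda_{B_2}(c_2, c) \equiv 1 \pmod 2$. Every $c \in D \setminus D'$ lies in $B_2$ and is distinct from $c_2$ (as $c_2 \notin D$), hence contributes $\lambda_{B_2}(c_2,c) = 0$; therefore $\sum_{c \in D'} \lambda_{B_2}(c_2, c) \equiv 1 \pmod 2$, so some $c_1 \in D'$ satisfies $\lambda_{B_2}(c_2, c_1) = 1$. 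Since $c_2 \in B_2$ and $c_1 \notin B_2$, Lemma~\ref{lem:exchangecyclesinbasis} applied to $B_2$ with the pair $(c_2, c_1)$ shows that $(B_2 \setminus \{c_2\}) \cup \{c_1\}$ is a cycle basis, with $c_1 \in B_1 \setminus B_2$ as required.

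Finally, to show the new basis has the same weight as $B_2$, it suffices to prove $\omega(c_1) = \omega(c_2)$. From $c_1 \in D$ we have $\lambda_{B_1}(c_1, c_2) = 1$ with $c_1 \in B_1$, $c_2 \notin B_1$, so the hypothesis on $B_1$ gives $\omega(c_1) \leq \omega(c_2)$. Symmetrically, $\lambda_{B_2}(c_2, c_1) = 1$ with $c_2 \in B_2$, $c_1 \notin B_2$, so the hypothesis on $B_2$ gives $\omega(c_2) \leq \omega(c_1)$. Hence $\omega(c_1) = \omega(c_2)$, and replacing $c_2$ by $c_1$ in $B_2$ preserves the total weight.

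I expect the only genuinely delicate point to be the middle step — showing that some element of $D'$ occurs with coefficient $1$ in the $B_2$-expansion of $c_2$. Once the $\mathbb{Z}/2\mathbb{Z}$ bookkeeping is in place (the coefficient of $c_2$ in its own $B_2$-expansion is $1$, and the other $B_2$-members of $D$ contribute $0$), this is a short counting argument; everything else is a direct application of Lemma~\ref{lem:exchangecyclesinbasis} together with the two weight hypotheses.
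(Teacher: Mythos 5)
Your proof is correct. It relies on the same two ingredients as the paper's proof --- expand $c_2$ over $B_1$, re-expand over $B_2$, use the linear independence of $B_2$ to force some member of the $B_1$-support lying outside $B_2$ to carry $c_2$ in its $B_2$-expansion, and finish with Lemma~\ref{lem:exchangecyclesinbasis} --- but the bookkeeping is organized differently. The paper splits the support into $D_1 = D \cap B_2$ and $D_2, D_3 \subseteq D \setminus B_2$ according to whether the weight is strictly below or equal to $\omega(c_2)$: the hypothesis on $B_1$ guarantees nothing of larger weight appears, the hypothesis on $B_2$ rules out $D_2$ from contributing $c_2$, and the cycle it finds therefore lands in $D_3$, carrying the weight equality by construction. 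You instead run the parity count over all of $D' = D \setminus B_2$ with no weight splitting, and recover $\omega(c_1) = \omega(c_2)$ afterwards by applying the two hypotheses in opposite directions ($\omega(c_1) \le \omega(c_2)$ from $B_1$ since $\lambda_{B_1}(c_1,c_2)=1$, and $\omega(c_2) \le \omega(c_1)$ from $B_2$ since $\lambda_{B_2}(c_2,c_1)=1$). A modest advantage of your arrangement is that the existence of the exchangeable cycle uses no weight hypothesis at all; the hypotheses enter only to certify that the swap preserves the total weight. Both arguments are valid and reach the same conclusion.
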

	\begin{proof}
		The result may be proved with Lemma~\ref{lem:exchangecyclesinbasis} if we demonstrate there exists $c_1 \in B_1$ with $\omega(c_1) = \omega(c_2)$ and $\lambda_{B_2}(c_2, c_1) = 1$. 
		
		Let $D_1 = \{d \in B_1 \cap B_2 | \lambda_{B_1}(d, c_2) = 1\}$, $D_2 = \{d \in B_1 \backslash B_2 | \lambda_{B_1}(d, c_2) = 1 \text{ and } \omega(d) < \omega(c_2)\}$ and $D_3 = \{d \in B_1 \backslash B_2 | \lambda_{B_1}(d, c_2) = 1 \text{ and } \omega(d) = \omega(c_2)\}$. By the hypothesis on $B_1$, no cycle of $B_1$ with weight greater than $c_2$ generates $c_2$. Then $c_2 = \bigoplus D_1 \oplus \bigoplus D_2 \oplus \bigoplus D_3$. 
		
		By the hypothesis on $B_2$, for all $d \in D_2$ and $e \in B_2$ such that $\lambda_{B_2}(e, d) = 1$, $\omega(e) \leq \omega(d) < \omega(c_2)$ thus $e \neq c_2$: for all $d \in D_2$, $\lambda_{B_2}(c_2, d) = 0$. Note also that $c_2 \not\in D_1$ as $D_1 \subseteq B_1$. Thus, if we have for all $d \in D_3$, $\lambda_{B_2}(c_2, d) = 0$ then $$c_2 = \bigoplus D_1 \oplus \bigoplus\limits_{d \in D_2 \cup D_3} \bigoplus\limits_{\substack{e \in B_2 \backslash c_2\\\lambda_{B_2}(e, d) = 1}} e$$
		
		As $c_2$ does not belong to the right side of the equation, $B_2$ is not linearly independent. This is a contradiction. There is then $d \in D_3$ such that $\lambda_{B_2}(c_2, d)~=~1$. By Lemma~\ref{lem:exchangecyclesinbasis}, $(B_2 \backslash \{c_2\}) \cup \{c_1\}$ is a cycle basis with same weight as $B_2$. As $d \in B_1 \backslash B_2$, the lemma is proved.
	\end{proof}
	
	\generatecyclebylowerweight*
	\begin{proof}
		The forward direction is proved by Lemma~\ref{lem:generatecyclebylowerweight:1}. 
		
		Now given a basis $B_1$ satisfying the second property and $B_2$ be a minimum cycle basis. By Lemma~\ref{lem:generatecyclebylowerweight:1}, then we can apply Lemma~\ref{lem:generatecyclebylowerweight:2} with $B_1$ and $B_2$. Given $c_2 \in B_2 \backslash B_1$, there exists $c_1 \in B_1 \backslash B_2$ such that $(B_2 \backslash \{c_2\}) \cup \{c_1\}  \in \mcb{G}$. We can then successively exchange all the cycles of $B_2$ with cycle of $B_1$ until the two coincide. As the new basis is still a minimum cycle basis, this demonstrates that $B_1$ is a minimum cycle basis. 
	\end{proof}
	
	\basisexchangeaxiom*
	\begin{proof}
		This is a direct consequence of Lemmas~\ref{lem:generatecyclebylowerweight:1} and \ref{lem:generatecyclebylowerweight:2}.
	\end{proof}
	
	\circuit*
	\begin{proof}
		We denote by $S_{l-1}$ the set $span(\text{cycle } c | \omega(c) \leq l - 1)$.
		
		We first prove the necessary condition. Let $B$ be a subset of cycles of size $l$ such that $B$ is independent in $\mathcal{M}(G, l)$. Then $B$ is independent in $\mathcal{M}(G)$: there exists a minimum cycle basis $B'$ in $G$ containing $B$.
		
		We assume there exists $D \subseteq B$ such that $\bigoplus D \in S_{l-1}$. Then there exists $E$ containing cycles of $G$ with weight at most $l - 1$ such that $\bigoplus D = \bigoplus E$. By Lemma~\ref{lem:generatecyclebylowerweight}, for all $e \in E$ and $f \in B'$ such that $\lambda_{B'}(f, e) = 1$, we have $\omega(f) \leq \omega(e) = l - 1$. This means that $f \not\in D$. Consequently $$\bigoplus D = \bigoplus\limits_{e \in E} \bigoplus\limits_{\substack{f \in B' \backslash D \\ \lambda_{B'}(f, e) = 1}} f$$
		
		As no cycle of $D$ belongs to the right side of the equation, $B'$ is linearly dependent, which is a contradiction. This proves that no such set $D$ exists.
		
		Now, we prove the sufficient condition, assuming that for all $D \subset B$, $\bigoplus D \not\in S_{l-1}$. Then first $B$ is linearly independent (otherwise for some $D$, we would have $\bigoplus D = 0 \in S_{l-1}$). Let $B^* \in \mcb{G}$. We now prove that if $B  \not\subseteq B^*$ then for all $d \in B \backslash B^*$, there exists $c \in B^*$ such that $B^* \backslash \{c\} \cup \{d\} \in \mcb{G}$. 
		
		Let $d \in B \backslash B^*$. If, for all $c \in B^*$ such that $\lambda_{B^*}(c, d) = 1$, we have $\omega(c) \neq l$, then by Lemma~\ref{lem:generatecyclebylowerweight}, $\omega(c) \leq l - 1$. Consequently, $d \in S_{l-1}$, this contradicts the hypothesis on $B$. Given $c$ with $\lambda_{B^*}(c, d) = 1$ and $\omega(c) = l$, Lemma~\ref{lem:exchangecyclesinbasis} proves $B^* \backslash \{c\} \cup \{d\} \in \mcb{G}$. By doing such exchanges we eventually get a minimum cycle basis containing $B$.

	\end{proof}
	
\end{document}